\newcommand{\IGNORE}[1]{}
\newtheorem{theorem}{Theorem}[section]
\newtheorem{claim}[theorem]{Claim}
\newtheorem{lemma}[theorem]{Lemma}
\theoremstyle{definition}
\newtheorem{remark}[theorem]{Remark}
\newtheorem{defn}[theorem]{Definition}
\newcommand{\E}{\mathbb{E}}
\newcommand{\pop}{\mathcal{P}}
\newcommand{\cop}{\mathcal{C}}
\newcommand{\rop}{\mathcal{R}}
\def\bw {{\bf w}}
\def \integers {\mathbb{Z}}
\newcommand{\poly}{\operatorname{poly}}
\newcommand{\danger}{\textsf{Danger}}
\newcounter{note}[section]
\newcommand{\bd}{\ensuremath{\mathbf{d}}}
\newcommand{\bdt}{\ensuremath{\bd^{t}}}
\newcommand{\vol}{\ensuremath{\textsf{vol}}}
\newcommand{\OPBP}{$(1+\beta)$-process\xspace}
\newcommand{\algo}{\ensuremath{\mathbf{DivideAndGreedy}}\xspace}
\newcommand{\initOneLiners}{%
    \setlength{\itemsep}{0pt}
    \setlength{\parsep }{0pt}
    \setlength{\topsep }{0pt}
}
\newenvironment{OneLiners}[1][\ensuremath{\bullet}]
    {\begin{list}
        {#1}
        {\initOneLiners}}
    {\end{list}}
\title{Online Carpooling using
  Expander Decompositions\footnote{This research was done under the auspices of the Indo-US Virtual Networked Joint Center IUSSTF/JC-017/2017.}}
	\author{Anupam Gupta\thanks{
	(anupamg@cs.cmu.edu)
	Computer Science Department,
	Carnegie Mellon University.
	Research supported in part by NSF awards CCF-1907820, CCF1955785, and CCF-2006953.
	}
	\and Ravishankar Krishnaswamy\thanks{
	(rakri@microsoft.com)
	Microsoft Research.
	}
		\and Amit Kumar\thanks{
		(amitk@cse.iitd.ac.in)
		Department of Computer Science and Engineering,
		Indian Institute of Technology, Delhi.
		}
	\and Sahil Singla\thanks{
        (singla@cs.princeton.edu)
        Department of Computer Science at        Princeton University
        and School of Mathematics at Institute for Advanced Study.
        Research supported in part by the Schmidt Foundation.
        }
}
\date{ \today}
\begin{document}
\maketitle

\setlength{\abovedisplayskip}{2pt}
\setlength{\belowdisplayskip}{2pt}

\begin{abstract}{
We consider the online carpooling problem: given $n$ vertices, a sequence of edges arrive over time. When an edge $e_t = (u_t, v_t)$ arrives at time step $t$, the algorithm must orient the edge either as $v_t \rightarrow u_t$ or $u_t \rightarrow v_t$, with the objective of minimizing the maximum discrepancy of any vertex, i.e., the absolute difference between its in-degree and out-degree. Edges correspond to pairs of persons wanting to ride together, and orienting denotes designating the driver. The discrepancy objective then corresponds to every person driving close to their fair share of rides they participate in.

\medskip

In this paper, we design efficient algorithms which can maintain polylog$(n,T)$ maximum discrepancy (w.h.p) over any sequence of $T$ arrivals, when the arriving edges are sampled independently and uniformly from any given graph $G$. This provides the first polylogarithmic bounds for the online (stochastic) carpooling problem. Prior to this work, the best known bounds were $O(\sqrt{n \log n})$-discrepancy for any adversarial sequence of arrivals, or $O(\log\!\log n)$-discrepancy bounds for the stochastic arrivals when $G$ is the complete graph.  

\medskip

The technical crux of our paper is in showing that the simple greedy algorithm, which has provably good discrepancy bounds when the arriving edges are drawn uniformly at random from the complete graph, also has polylog discrepancy when $G$ is an expander graph. We then combine this with known expander-decomposition results to design our overall algorithm.
}\end{abstract}




\section{Introduction}
\label{sec:intro}

Consider the following \emph{edge orientation} problem: we are given a set
$V$ of $n$ nodes, and undirected edges arrive online one-by-one. Upon
arrival of an edge $\{u,v\}$, it has to be oriented as either $u \to v$
or $v \to u$, immediately and irrevocably. The goal is to minimize the
\emph{discrepancy} of this orientation at any time $t\in [T]$ during the arrival
process, i.e., the maximum imbalance
between the in-degree and out-degree of any node.  Formally, if we let 
$\bm{\chi}^t$ to denote the orientation at time $t$ and $\delta_t^-(v)$ (resp. $\delta_t^+(v)$) to denote the number of in-edges (resp. out-edges) incident to $v$ in $\bm{\chi}^t$, then we want to minimize
\[ \max_t \text{disc}(\bm{\chi}^t) := \max_t \max_v | \delta_t^-(v) - \delta_t^+(v) |. \] 
If the entire sequence of edges is known up-front, one can use a
simple cycle-and-path-peeling argument to show that any set of
edges admit a discrepancy of at most $1$. The main focus of this work is in understanding how much loss is caused by the presence of uncertainty, since we don't have knowledge of future arrivals when we irrevocably orient an edge.

This problem was proposed by Ajtai et al.~\cite{AANRSW-Journal98} as a special
case of the \emph{carpooling problem} where hyperedges arrive online,
each representing a carpool where one person must be designated as a
driver. The ``fair share'' of driving for person $i$ can be defined as
$\sum_{e: i \in e} 1/|e|$, and we would like each person to drive
approximately this many times. In the case of graphs where each
carpool is of size $|e| = 2$, this carpooling problem is easily
transformed into the edge-orientation problem.

Ajtai et al.\ showed that while deterministic algorithms cannot have
an $o(n)$ discrepancy, they gave a randomized ``local
greedy'' which has an expected discrepancy (for any $T \geq 1$) of $O(\sqrt{n \log n})$ for any online input sequence of $T$ arrivals.  Indeed, note that the discrepancy bound is independent of the length of the sequence $T$, and depends only
on the number of nodes, thus giving a non-trivial improvement over the naive random assignment, which will incur a discrepancy of $O(\sqrt{T \log n})$. Intriguingly, the lower bound they show for online algorithms is only $\Omega((\log n)^{1/3})$---leaving a large gap between the upper and
lower bounds. 

Given its apparent difficulty in the adversarial online model, Ajtai et al.\ proposed a stochastic model, where each edge is an
independent draw from some underlying probability distribution over
pairs of vertices. They considered the the uniform distribution, which
is the same as presenting a uniformly random edge of the complete
graph at each time. In this special case, they showed that the greedy
algorithm (which orients each edge towards the endpoint with lower
in-degree minus out-degree) has expected discrepancy
$\Theta(\log\!\log n)$. Their analysis crucially relies on the
structure and symmetry of the complete graph.

In this paper, we consider this stochastic version of the problem for general graphs:
i.e., given an arbitrary simple graph $G$, the online input is a sequence of
edges chosen independently and uniformly at random (with replacement) from the edges of
this graph $G$\footnote{It is possible to extend our results, by losing a $\log T$ factor, to edge-weighted distributions where an edge is drawn i.i.d. with probability proportional to its weight. Since this extension
uses standard ideas like bucketing edges with similar weights, we restrict our attention    to  arrivals from a graph $G$ for simplicity.}. 
  Our main result is the following:

\begin{theorem}[Main Theorem] \label{thm:final}
  There is an efficient algorithm for the edge-orientation problem that
  maintains, w.h.p, a maximum discrepancy of $O(\poly\log (nT))$  on input sequences formed by i.i.d.\ draws from the edges of a given graph $G$.
\end{theorem}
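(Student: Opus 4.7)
The plan is to prove Theorem~\ref{thm:final} in two conceptually separate steps: (i)~establish the key technical ingredient that the simple greedy algorithm achieves $\polylog(nT)$ discrepancy, w.h.p., when the input graph $G$ is an expander and arrivals are i.i.d.\ uniform from $E(G)$; and (ii)~reduce the general-graph problem to the expander case via a recursive expander-decomposition scheme.

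For step~(i), I would run the standard greedy rule (orient $\{u_t,v_t\}$ away from the endpoint of larger signed discrepancy), and analyze the vector of signed discrepancies $\bd^t \in \reals^V$ via a potential function tailored to the spectrum of $G$'s normalized Laplacian. Concretely, I would consider a smoothed/truncated exponential potential of the form $\Phi(\bd) = \sum_v \cosh(\lambda\, \bd(v))$ or a related soft-max, and show that each arrival decreases $\Phi$ in expectation whenever $\|\bd\|_\infty$ is already larger than some $\polylog(nT)$ threshold. The key calculation uses that for a uniformly random edge $\{u,v\}$ of an expander, $\E[(\bd(u)-\bd(v))^2]$ is $\Omega(\|\bd - \bar\bd\|_2^2 / |E|)$ by the Poincar\'e inequality, while greedy always picks the sign that reduces $|\bd(u)-\bd(v)|$ toward zero. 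Combining this drift with a supermartingale/Azuma tail bound should pin $\Phi$ at $\poly(n)$ w.h.p.\ over the whole horizon $T$, giving the desired $\|\bd^t\|_\infty = \polylog(nT)$ bound.

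For step~(ii), I would invoke a hierarchical expander decomposition: partition $E(G)$ into $L = O(\log n)$ layers $E_1, \ldots, E_L$ such that in layer $\ell$ the vertex set is grouped into clusters $\{C_i^\ell\}$ where each induced subgraph on $C_i^\ell$ using edges of $E_\ell$ is an expander, and the ``boundary'' edges leaving these clusters constitute the higher layers. For an arriving edge $e_t$, identify the finest cluster containing both endpoints and handle $e_t$ using the greedy algorithm run locally inside that cluster, i.e., using a per-cluster signed-discrepancy vector. Conditioned on $e_t$ falling inside a given cluster, $e_t$ is uniform on the edges of that expander, so step~(i) applies and contributes $\polylog(nT)$ to the in-cluster discrepancy. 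To bound the true discrepancy of a vertex $v$, decompose $\delta^-(v) - \delta^+(v)$ as a telescoping sum over the $L$ clusters $v$ belongs to, one per level; each summand is the local discrepancy of a single expander and is $\polylog(nT)$ w.h.p., so a union bound over $v$, $t$, and $L$ levels gives the theorem.

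The main obstacle is step~(i): proving that greedy controls discrepancy on arbitrary expanders. Unlike the complete graph, we lose the full symmetry that the $O(\log \log n)$ proof of Ajtai et al.\ relies on, and a naive second-moment $\E[\Phi(\bd^{t+1}) - \Phi(\bd^t)]$ argument has to carefully trade off the Poincar\'e gain against the second-order terms in the Taylor expansion of $\cosh$, which grow with $\|\bd\|_\infty$. A secondary difficulty is that in the recursion of step~(ii), the induced distribution on inter-cluster edges is non-uniform (weighted by how many original edges cross each cluster pair), so the expander-greedy analysis must either be robust to weighted i.i.d.\ distributions or the decomposition must be built so that all layers remain near-uniform; the footnote suggests the authors absorb this into a $\log T$ factor via bucketing, which I would adopt.
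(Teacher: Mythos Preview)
Your two-step skeleton (greedy on expanders, then expander decomposition) is exactly the paper's plan, and your step~(ii) is essentially what the paper does: an edge-disjoint decomposition into expanders with each vertex in $O(\log^2 n)$ pieces, greedy run independently on each piece, and the global discrepancy bounded by summing the local ones. Your worry about non-uniform inter-cluster distributions is misplaced here: because the decomposition is edge-disjoint, conditioning on the arriving edge landing in a particular piece $G_i$ makes it exactly uniform on $E(G_i)$.

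Where you diverge from the paper, and where there is a real gap, is step~(i). The paper does \emph{not} control the $\cosh$ potential via a Poincar\'e/spectral argument. Your sketch would need to lower-bound the linear drift term $\lambda\,\E_{(u,v)\sim E}\big[|\sinh(\lambda d_u)-\sinh(\lambda d_v)|\big]$ by something that beats the quadratic term $\lambda^2\,\E[\cosh(\lambda d_u)+\cosh(\lambda d_v)]$. Poincar\'e gives a lower bound on the \emph{squared} difference, and Cauchy--Schwarz goes the wrong way; even an $L^1$-Cheeger bound compares against a median/variance quantity, not against $\Phi$ itself. The bad case is when both endpoints have nearly equal large discrepancy: the linear term vanishes pointwise while the quadratic term is $\Theta(\lambda^2\Phi)$, and nothing in your outline rules this out in expectation. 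The paper handles exactly this via a different mechanism: it introduces an auxiliary \OPBP on the \emph{product} distribution (each endpoint sampled independently with probability proportional to its degree), shows by a majorization/coupling argument that the one-step $\E[\Delta\Phi]$ of greedy on the expander is dominated term-by-term by a surrogate for the \OPBP, and then bounds the surrogate directly. The latter bound hinges on a ``$\danger$-set'' argument showing that not too much probability mass can sit near any fixed discrepancy level, and this step \emph{requires the expander to be $\gamma$-weakly-regular} (minimum degree at least a $\gamma$-fraction of average). You never invoke weak regularity, and without it the argument breaks.

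This regularity requirement also feeds back into step~(ii): the paper cannot use an off-the-shelf expander decomposition but must produce $\gamma$-weakly-regular $\alpha$-expanders, which it does by first peeling into ``uniformly-dense'' pieces and only then running the standard sparse-cut recursion (with self-loops to preserve volumes). Your decomposition sketch omits this, so even granting step~(i) for weakly-regular expanders, your pieces would not be guaranteed to satisfy its hypotheses.
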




\subsection{Our Techniques}


Let us fix some notation. Given a (multi)graph $G = (V,E)$ with
$|V| = n$, the algorithm is presented with a vector $v^t$ at each time
as follows. A uniformly random edge $(u,v) \in G$ is sampled, and the
associated characteristic vector $v^t = \mathbf{e}_u - \mathbf{e}_v$
is presented to the algorithm, where $\mathbf{e}_u \in \mathbb{R}^n$
has all zeros except index $u$ being $1$. The algorithm must
immediately sign $v^t$ with $\chi^t \in \{-1, 1\}$, to keep
the discrepancy bounded at all times $t$. Here the discrepancy
of node $u$ at time $t$ is the $u^{th}$ entry of the vector
$\sum_{s \leq t} \chi^{s} v^{s}$ (which could be negative), and the discrepancy of the
algorithm is the maximum absolute discrepancy over all vertices, i.e., 
$ \Big\| \sum_{s \leq t} \chi^{s} v^{s} \Big\|_{\infty}$ .

A natural algorithm
is to pick a uniformly random orientation for each arriving edge. This
maintains zero expected discrepancy at each node. However, the large
variance may cause the maximum discrepancy over nodes to be as large
as $\Omega(\sqrt{T})$, where $T$ the total number of edges (which is
the same as the number of time-steps). For example, this
happens even on $T$ parallel edges between two nodes. In this case, however, the \emph{greedy
algorithm} which orients the edge from the vertex of larger discrepancy to that of smaller discrepancy works well. Indeed it is not known to be
bad for stochastic instances. (Since it is a
deterministic algorithm, it can perform poorly on adversarial inputs due to known $o(n)$ lower bounds~\cite{AANRSW-Journal98}.)

Building on the work of Ajtai et al. who consider stochastic arrivals on complete graphs, the first step towards our overall algorithm is to consider the problem on \emph{expander graphs}.  At a
high level, one hurdle to achieving low discrepancy in the stochastic
case is that we reach states where both endpoints of a randomly-chosen edge
already have equally high discrepancy. Then, no matter how we orient the edge,
we increase the maximum discrepancy. But this should not happen in
expander graphs: if $S$ is the set of ``high'' discrepancy vertices,
then the expansion of the graph implies that $|\partial S|$ must be a
large fraction of the total number of edges incident to
$S$. Therefore, intuitively, we have a good chance of reducing the
discrepancy if we get edges that go from $S$ to low-degree nodes. 
To make this idea formal, we relate the greedy process on expander graphs
$G$ to the so-called \OPBP over an \emph{easier arrival} sequence where the end-points of a new edge are chosen from a \emph{product distribution}, where the probability of choosing a vertex is proportional to its
degree in $G$. However, in the \OPBP\footnote{The name \OPBP stems from the notion for an analogous load-balancing (or) balls-and-bins setting~\cite{PTW-Journal15}, this process would be like the $(1+\beta)$-fractional version of the power-of-two choices process.}, the algorithm orients a new edge greedily with only probability $\beta$ for some small value of $\beta$, and does a random orientation with the remaining probability $(1-\beta)$.

Indeed, we compare these two processes by showing that (a)~the expected increase of a
natural potential $\Phi := \sum_v \cosh(\lambda \, {\rm discrepancy}(v))$---which can
be thought of as a soft-max function---is lower for the greedy algorithm on expanders when compared to the \OPBP on the product distribution, and (b)~the same potential increases very slowly
(if at all) on the product distribution.  A similar idea was used by Peres et al.~\cite{PTW-Journal15} for a 
related stochastic load balancing problem; however, many of the technical details are different.


The second component of the algorithm is to decompose a general
graph into expanders. This uses the (by-now commonly used) idea of
expander decompositions. Loosely speaking, this says that the edges of
any graph can be decomposed into some number of smaller graphs (each
being defined on some subset of vertices), such that (a)~each of these
graphs is an expander, and (b)~each vertex appears in only a
poly-logarithmic number of these expanders. Our arguments for
expanders require certain weak-regularity properties---namely the
degrees of vertices should not be too small compared to the average
degree---and hence some care is required in obtaining decompositions
into such expanders. These details appear in \S\ref{sec:expand-decomp}.

Our overall algorithm can then be summarized in Algorithm~\ref{alg:expander-greedy}.

\begin{algorithm}
	\caption{{\algo} (graph $G=(V,E)$)}
	\label{alg:expander-greedy}
	\begin{algorithmic}[1]
		\State run the expander-decomposition algorithm in~\Cref{thm:regular-expanders} (in~\Cref{sec:tieup}) on $G$ to obtain a collection ${\cal P} = \{G_1, G_2, \ldots, G_k\}$ of edge-disjoint expander graphs.
		\State initialize ${\cal H} = \{H_1, H_2, \ldots H_k\}$ to be a collection of empty graphs, where $H_i$ is the directed multi-graph consisting of all edges which have arrived corresponding to base graph $G_i$, along with their orientations assigned by the algorithm upon arrival.
		\For {each new edge $e \equiv \{u,v\}$ that arrives at time-step $t$}  \label{alg:arrival}
		\State let $i$ denote the index such that $e \in G_i$ according to our decomposition. \label{alg:identify} 
		\State add $e$ to $H_i$, and orient $e$ in a greedy manner w.r.t $H_i$, i.e., from $u$ to $v$ if ${\rm disc}_{H_i}(u) \geq {\rm disc}_{H_i}(v)$, where ${\rm disc}_H(w) = \delta_{H_i}^{{\rm in}}(w) - \delta_{H_i}^{{\rm out}}(w)$ is the in-degree minus out-degree of any vertex $w$ in the  current sub-graph $H_i$ maintained by the algorithm.  \label{alg:greedy}
		\EndFor
	\end{algorithmic}
\end{algorithm}


\subsection{Related Work}
\label{sec:related-work}


The study of discrepancy problems has a long history; see the 
books~\cite{Matousek-Book09,Chazelle-Book01} for details on the
classical work. 
The problem of online discrepancy minimization was studied by
Spencer~\cite{Spencer77}, who showed an $\Omega(\sqrt{T})$ lower bound
for for adaptive adversarial arrivals. More refined lower bounds were given
by B\'ar\'any~\cite{Barany79};
see~\cite{BJSS-STOC20} for many other references. 
Much more recently, Bansal and
Spencer~\cite{BS-arXiv19} and Bansal et al.~\cite{BJSS-STOC20}
consider a more general vector-balancing problem, where each request
is a vector $v^t \in \mathbb{R}^n$ with $\|v^t \|_{\infty} \leq1$, and
the goal is to assign a sign $\chi^t \in \{-1,1\}$ to each vector to
minimize $\| \sum_t \chi^t v^t\|_\infty$, i.e., the largest coordinate of
the signed sum. Imagining each edge $e_t = \{u,v\}$ to be the vector
$\frac{1}{\sqrt{2}}(\mathbf{e}_u - \mathbf{e}_v)$ (where this initial
sign is chosen arbitrarily) captures the edge-orientation problem up
to constant factors. Bansal et al.\ gave an
$O(n^2 \log (nT))$-discrepancy algorithm for the natural stochastic
version of the problem under general distributions. 
For some special geometric problems, they gave an algorithm that
maintains $\poly(s, \log T, \log n)$ discrepancy for sparse vectors that have only $s$ non-zero coordinates. These improve on the work of Jiang et
al.~\cite{JKS-arXiv19}, who give a sub-polynomial discrepancy coloring
for online arrivals of points on a line. A related variant of these 
geometric problems
was also studied in Dwivedi et al.~\cite{DFGR19}.


Very recently, an independent and exciting work of Alweiss, Liu, and
Sawhney~\cite{ALS-arXiv20} gave a randomized algorithm that maintains
a discrepancy of $O(\log (nT)/\delta)$ for any input sequence chosen
by an oblivious adversary with probability $1-\delta$, even for the
more general vector-balancing problem for vectors of unit Euclidean
norm (the so-called K\'oml\'os setting). Instead of a potential based
analysis like ours, they directly argue  why a carefully chosen 
randomized greedy algorithm ensures  w.h.p. that the
discrepancy vector is  always sub-Gaussian. A concurrent work of Bansal et al.~\cite{BJMSS-arXiv20} also obtains similar results for i.i.d. arrivals, but they use a very different potential than our expander-decomposition approach.   It is an interesting open question to extend our approach to hypergraphs and re-derive their results.


\subsection{Notation} \label{sec:notation}

We now define some graph-theoretic terms that are useful for the remainder of the paper.

\begin{defn}[Volume and $\alpha$-expansion]
Given any graph $G=(V,E)$, and set $S \subseteq V$ its \emph{volume} is
defined to be $\vol(S) := \sum_{v \in S} \text{degree}(v)$. We  say $G$ is an \emph{$\alpha$-expander} if  
\[ \min_{S\subseteq V} \frac{|E(S, V \setminus S)|}{\min\{ \vol(S), \vol(V \setminus S)  \}} \geq \alpha.
\]
\end{defn}

We will also need the following definition of ``weakly-regular''
graphs, which are graphs where every vertex has degree \emph{at least} a constant factor of the average degree. Note that the \emph{maximum} degree can be arbitrarily larger than the average degree.

\begin{defn}[$\gamma$-weakly-regular] For $\gamma \in [0,1]$, a graph $G=(V,E)$ is called \emph{$\gamma$-weakly-regular} if every vertex $v\in V$ has degree at least $\gamma \cdot {\sum_{u\in V} \text{degree}(u)}/{|V|}$.
\end{defn}

\begin{defn}[Discrepancy Vector]
Given any directed graph $H=(V,A)$ (representing all the oriented edges until any particular time-step), let $\bd \in \integers^{|V|}$ represent the discrepancy vector of the current graph, i.e. the $v^{{\rm th}}$ entry of $\bd$, denoted by $d_v$ is the difference between the number of in-edges incident at $v$ and the number of out-endges incident at $v$ in $H$.
\end{defn}

\section{The Greedy Algorithm on Expander Graphs} 

In this section, we consider the special case when the
graph $G$ is an expander. More formally, we show that the greedy algorithm is actually good for such graphs.

\begin{defn}[Expander Greedy Process]
The greedy algorithm maintains a current discrepancy $d^t_v$ for each vertex $v$, which is the in-degree minus out-degree of every vertex among the previously arrived edges. Initially,  $d^1_v = 0$ for every vertex $v$ at the beginning of time-step $1$. At each time~$t \geq 1$, a uniformly random edge $e \in G$ with end-points $\{u,v\}$ is presented to the algorithm, and suppose w.l.o.g. $d^t_u \geq d^t_v$, i.e., $u$ has larger discrepancy (ties broken arbitrarily). Then, the algorithm orients the edge from $u$ to $v$. The discrepancies of $u$ and $v$ become $d^{t+1}_u = d^{t}_u -1$ and $d^{t+1}_v = d^{t}_u + 1$, and other vertices' discrepancies are unchanged.
\end{defn}

\begin{theorem}
  \label{thm:main}
    Consider any $\gamma$-weakly-regular $\alpha$-expander $G$, and suppose edges are arriving as independent samples from $G$ over a horizon of $T$ time-steps. Then, the greedy algorithm maintains a discrepancy $d^t_v$ of $O(\log^5 nT)$ for
    every time $t$ in $[0\ldots T]$ and every vertex $v$, as long as
    $\alpha \geq 6\lambda$, $\gamma \geq \lambda^{1/4}$, where $\lambda = O(\log^{-4} nT)$. 
\end{theorem}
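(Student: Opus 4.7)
The plan is to control the soft-max potential
\begin{align*}
\Phi^t \;:=\; \sum_{v} \cosh(\lambda \, d^t_v),
\end{align*}
and argue that $\Phi^t$ stays bounded by $\poly(nT)$ for every $t \le T$ with high probability, when $\lambda = \Theta(\log^{-4} nT)$ is chosen as in the statement. Since $\cosh(\lambda d_v) \geq \tfrac{1}{2}e^{\lambda|d_v|}$ gives $\max_v |d^t_v| \leq \lambda^{-1} \log(2\Phi^t)$, such a bound immediately yields the claimed $O(\log^5 nT)$ discrepancy. Everything thus reduces to a high-probability bound on $\Phi^t$.

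For the one-step analysis, when greedy sees an edge $(u,v)$ with $d_u \geq d_v$, the hyperbolic identities $\cosh(x \pm \lambda) = \cosh(x)\cosh(\lambda) \pm \sinh(x)\sinh(\lambda)$ give
\begin{align*}
\Delta \Phi_{(u,v)} \;=\; (\cosh(\lambda)-1)\bigl(\cosh(\lambda d_u)+\cosh(\lambda d_v)\bigr) \;-\; \sinh(\lambda)\,\bigl|\sinh(\lambda d_u)-\sinh(\lambda d_v)\bigr|.
\end{align*}
Averaging over a uniformly random edge of $G$, the first (``variance'') term becomes $\tfrac{\cosh(\lambda)-1}{|E|}\sum_v \deg(v)\cosh(\lambda d_v)$, while the second (``drift'') term takes the Dirichlet-energy-like form $\tfrac{\sinh(\lambda)}{|E|}\sum_{(u,v)\in E}|f_u-f_v|$ with $f_v := \sinh(\lambda d_v)$. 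This is where the expansion of $G$ must be exploited.

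To make the comparison outlined in \S 1.1 precise, I would apply an $\ell_1$-Cheeger inequality to the level sets of $f$: the $\alpha$-expansion of $G$ yields $\sum_{(u,v)\in E}|f_u - f_v|\geq \alpha\sum_v \deg(v)|f_v - m|$ for the weighted median $m$ of $f$, and a triangle-inequality step then gives
\begin{align*}
\frac{1}{|E|}\sum_{(u,v)\in E}|f_u - f_v|\;\geq\; \frac{\alpha}{(2|E|)^2}\sum_{u,v}\deg(u)\deg(v)\,|f_u-f_v|.
\end{align*}
Hence the greedy-on-$G$ drift dominates the drift of the \OPBP on the degree-weighted product distribution with $\beta = \Omega(\alpha)$, while a direct calculation using $\sum_{(u,v)\in E}(\cosh(\lambda d_u)+\cosh(\lambda d_v)) = \sum_v \deg(v)\cosh(\lambda d_v)$ shows the variance terms of the two processes coincide exactly. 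Thus it suffices to analyse the product-distribution process, where independence of the two endpoints decouples the drift. Using $|\sinh x| \geq \tfrac{1}{2}\cosh x$ for $|x|\geq 1$, whenever $\Phi^t \geq \Phi^\star := \poly(n)$ its mass is carried by coordinates with $|\lambda d_v|\geq 1$, so the drift exceeds the variance by a factor of $\tfrac{\alpha\sinh(\lambda)}{\cosh(\lambda)-1}\cdot \Omega(1) \approx \Omega(\alpha/\lambda)$, and $\alpha \geq 6\lambda$ makes this a constant factor larger than one.

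Putting these together, $\E[\Phi^{t+1}\mid \bd^t]\leq \Phi^t$ whenever $\Phi^t\geq \Phi^\star$, with only $O(\lambda^2)$ additive growth otherwise; a standard hitting-time / stopped-supermartingale argument (invoking Doob's maximal inequality on the stopped process) then yields $\Pr[\sup_{t\leq T}\Phi^t \geq \poly(nT)]\leq 1/\poly(nT)$, completing the proof after a union bound over $t\in [T]$. The main obstacle I foresee is executing the Cheeger-type comparison cleanly in the weakly-regular (rather than regular) setting: the weighted median $m$ of $f$ must be controlled, and the losses incurred when passing between the edge-uniform and degree-product distributions must remain $\Omega(\gamma)$. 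This is precisely why $\gamma$-weak-regularity (and not mere expansion) is needed, and the fractional exponent in the hypothesis $\gamma \geq \lambda^{1/4}$ is driven by how these losses interact with the drift-versus-variance ratio in the product-distribution analysis.
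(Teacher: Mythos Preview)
Your overall architecture---track $\Phi^t = \sum_v \cosh(\lambda d^t_v)$, show a one-step drift bound when $\Phi^t$ is not too large, and read off $\max_v|d^t_v| = O(\lambda^{-1}\log\Phi^t)$---matches the paper. Your reduction from greedy-on-$G$ to the product-distribution process is genuinely different: the paper uses a majorization/coupling argument (it shows, for every prefix $S_k$ of the $k$ highest-discrepancy vertices, that $\Pr[-1\in S_k]$ is larger under the expander process than under the \OPBP with $\beta=\alpha$, and then invokes Schur convexity of $\Phi$). Your $\ell_1$-Cheeger route---co-area plus triangle inequality to pass from $\frac{1}{|E|}\sum_{e}|f_u-f_v|$ to $\frac{\alpha}{(2|E|)^2}\sum_{u,v}\deg(u)\deg(v)|f_u-f_v|$---is a clean analytic alternative and, as you note, makes the variance terms coincide exactly. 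Either approach reduces the problem to analysing the degree-product process.

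The real gap is in that product-distribution step, and your sketch does not close it. The assertion that ``independence decouples the drift'' together with $|\sinh x|\geq \tfrac12\cosh x$ for $|x|\geq 1$ does not give $\E_{u,v\sim\pi}[|f_u-f_v|]\gtrsim \E_{v\sim\pi}[\cosh(\lambda d_v)]$: even when every $|\lambda d_v|\geq 1$, most of the $\pi$-mass can sit on vertices whose discrepancies are all within $O(1/\lambda)$ of some common large value $d_0$, so $|f_u-f_v|$ is tiny while each $\cosh(\lambda d_v)$ is huge. This ``danger'' configuration is exactly what the paper spends its effort on: it conditions on the first sampled vertex $u$, defines $\danger(u)=\{v:|d_u-d_v|<2/\lambda\}$, and shows that when $|d_u|$ is large and $\Phi\leq (nT)^{10}$ one has $\Pr_\pi[v\notin\danger(u)]\geq\Omega(\sqrt\lambda)$. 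The proof of that bound uses $\sum_v d_v=0$ together with $\gamma$-weak-regularity (to convert a probability lower bound into a cardinality lower bound on $V\setminus\danger(u)$), and this is where the hypothesis $\gamma\geq\lambda^{1/4}$ is consumed. You have also mislocated the role of $\gamma$: the Cheeger comparison works for any $\alpha$-expander with no regularity assumption (the weighted median causes no trouble---the co-area/expansion step is exact). Weak regularity is needed only inside the product-distribution analysis, to rule out the danger configuration above; without a replacement for that argument, the drift-versus-variance inequality you want can fail.
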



For the sake of concreteness, it might be instructive to assume $\alpha \approx \gamma \approx O(\frac{1}{\log n})$, which is roughly what we will obtain from our expander-decomposition process. 

\subsection{Setting Up The Proof} \label{sec:setup}

Our main idea is to introduce \emph{another random process} called the \OPBP, and show that the \OPBP stochastically dominates the expander-greedy process in a certain manner, and separately bound the behaviour of the \OPBP subsequently. By combining these two, we get our overall analysis of the expander-greedy process.

To this end, we first define a random arrival sequence where the end-points of each new edge are actually sampled independently from a \emph{product distribution}.

\begin{defn}[Product Distribution]
Given a set $V$ of vertices  with associated weights
  $\{w_v \geq 0 \mid v \in V\}$, at each time $t$, we select two vertices $u,v$ as {\em two independent samples} from $V$, according to the distribution where  any vertex $v \in V$
  is chosen with probability $\frac{w_{v}}{\sum_{v' \in V} w_{v'}}$,
  and  the vector $v^t := \chi_u - \chi_v$ is presented to the
  algorithm.
\end{defn}

We next define the \OPBP, which will be crucial for the analysis.
\begin{defn}[\OPBP on product distributions]
Consider a product distribution over a set of vertices $V$. When presented with a vector $v^t := \chi_u - \chi_v$ from this product distribution at time $t$, the \OPBP assigns a sign to the vector $v^t$ as follows: 
with
   probability $(1-\beta)$, it assigns it uniformly $\pm 1$, and only
  with the remaining probability $\beta$ it uses the greedy algorithm
   to sign this vector.
\end{defn}
Note that setting $\beta=1$ gives us back the greedy algorithm, and
$\beta=0$ gives an algorithm that assigns a random sign to each vector.

\begin{remark}
The original \OPBP was in fact introduced in~\cite{PTW-Journal15}, where Peres et al. analyzed a general load-balancing process over $n$ bins (corresponding to vertices), and balls arrive sequentially. Upon each arrival, the algorithm gets to sample a random edge from a $k$-regular expander\footnote{Actually their proof works for a slightly more general notion of expanders, but which is still insufficient for our purpose.} $G$ over the bins, and places the ball in the lighter loaded bin among the two end-points of the edge. They show that this process maintains a small maximum load, by relating it to an analogous \OPBP, where instead of sampling an edge from $G$, \emph{two bins} are chosen uniformly at random, and the algorithm places the ball into a random bin with probability $1-\beta$, and the lesser loaded bin with probability $\beta$. Note that their analysis inherently assumed that the two vertices are sampled from
the uniform distribution where all weights $w_u$ are
equal. By considering arbitrary product
distributions, we are able to handle arbitrary graphs with
a non-trivial conductance, i.e., even those that \emph{do not} satisfy the $k$-regularity property. 
This is crucial for us because the expander decomposition algorithms, which reduce general graphs to a
collection of expanders, do not output regular expanders.
\end{remark}

Our analysis will also involve a potential function (intuitively the soft-max of the vertex discrepancies) for both the expander-greedy process as well as the \OPBP. 
\begin{defn}[Potential Function]
Given vertex discrepancies $\bd \in \integers^{|V|}$, define
\begin{gather}
\Phi(\bd) := \sum_v \cosh(\lambda d_v), \label{eq:pot}
\end{gather}
where $\lambda < 1$ is a suitable parameter to be optimized. 
\end{defn}
Following many prior works, we use the hyperbolic cosine function to symmetrize for positive and negative discrepancy values. When $\bd$ is clear from the context, we will write $\Phi(\bd)$ as $\Phi$. We will also use $\bdt$ to refer to the discrepancy vector at time $t$, and $d^t_u$ to the discrepancy of $u$ at time $t$.  We will often ignore the superscript $t$ if it is clear from the context. 


We are now ready to define the appropriate parameters of the \OPBP. Indeed, given the expander-greedy process defined on graph $G$, we construct an associated \OPBP where for each vertex $v$, the probability of sampling any vertex in the product distribution is proportional to its degree in $G$, i.e., $w_v = {\rm degree}_G(v)$ for all $v \in V$. We also set the $\beta$ parameter equal to $\alpha$, the conductance of the graph $G$.

\subsection{One-Step Change in Potential} \label{sec:overview}


The main idea of the proof is to  use a  \emph{majorization argument} to argue that \emph{the expected one-step change} in potential
of the expander process can be upper bounded by that of the \OPBP, if the two processes start at the same discrepancy configuration $\bdt$. 
Subsequently, we bound the one-step change for the \OPBP in \cref{sec:beta-process}.


To this end, consider a time-step $t$, where the current discrepancy vector of the expander process is $\bdt$. 
Suppose the next edge in the expander process is $(i,j)$, where $d^t_i >
d^t_j$. Then  the greedy algorithm will always
choose a sign such that $d_i$ decreases by $1$, and $d_j$ increases by
$1$. Indeed, this ensures the overall potential is non-increasing
unless $d_i = d_j$. More importantly, the potential term for other
vertices remains unchanged, and so we can express the expected change
in potential as having contributions from precisely two terms, one due
to $d_i \to d_i - 1$ (called the \emph{decrease term}), and denoted as
$\Delta_{-1}(t)$, and one due to $d_j \to d_j + 1$ (the \emph{increase term}), denoted as $\Delta_{+1}(t)$: 
\begin{align}
\E_{(i,j) \sim G}[\Delta \Phi] &= \E_{(i,j) \sim G}\Big[\Phi(\bd^{t+1}) - \Phi(\bdt) \Big] \notag \\
& \hspace{-1cm} = \underbrace{\E_{(i,j})\Big[\cosh(\lambda (d_i-1)) -
                                 \cosh(\lambda (d_i)) \Big]}_{=: \Delta_{-1} (\bdt)} + \underbrace{\E_{(i,j)}
                                 \Big[\cosh(\lambda (d_j+1)) -
                                 \cosh(\lambda (d_j))\Big]}_{=: \Delta_{+1} (\bdt)} . \notag 
\end{align}

Now, consider the \OPBP on the vertex set $V$, where the product
distribution is given by  weights $w_u = \deg(u)$ for each $u \in
V$, starting with the same discrepancy
vector $\bdt$ as the expander process at time $t$. Then, if
$u$ and $v$ are the two vertices sampled independently according to
the product distribution, then by its definition, the \OPBP signs this pair randomly with probability $(1-\beta)$, and greedily with probability $\beta$. 
For the sake of analysis, we define two terms analogous to $\Delta_{-1}
(\bdt)$ and $\Delta_{+1} (\bdt)$ for the \OPBP. To
this end, let $i \in \{u,v\}$ denote the identity of the random vertex
to which the \OPBP assigns $+1$. Define
\begin{gather}
  \widetilde{\Delta}_{+1} (\bdt) : =\E_{(u,v) \sim \bw \times \bw}
  \Big[\cosh(\lambda (d_i+1)) - \cosh(\lambda (d_i))\Big], \label{eq:2}
\end{gather}
where $\bw \times \bw$ refers to two independent choices from the product distribution corresponding to $w$. 
Similarly let $j \in \{u,v\}$ denote the identity of the random vertex
to which the \OPBP assigns $-1$, and define
\begin{gather}
  \widetilde{\Delta}_{-1}(\bdt) :=
  \E_{(u,v) \sim \bw \times \bw} \Big[\cosh(\lambda (d_j-1)) -
  \cosh(\lambda (d_j))\Big]. \label{eq:3}
\end{gather}

In what follows, we bound $\Delta_{-1} (\bdt) \leq \widetilde{\Delta}_{-1} (\bdt)$ through a coupling argument, and similarly bound $\Delta_{+1} (\bdt) \leq \widetilde{\Delta}_{+1} (\bdt)$ using a separate coupling.

A subtlety: the expected one-step change in $\Phi$ in the
expander process precisely equals
$\Delta_{-1} (\bdt) + \Delta_{+1}(\bdt)$. However, if we define an
analogous potential for the \OPBP, then the one-step change in
potential there \emph{does not} equal the sum
$\widetilde{\Delta}_{-1} (\bdt) + \widetilde{\Delta}_{+1}
(\bdt)$. Indeed, we sample $u$ and $v$ i.i.d.\ in the \OPBP,
it is possible that $u = v$ and therefore the one-step change in
potential is $0$, while the sum
$\widetilde{\Delta}_{-1} (\bdt) + \widetilde{\Delta}_{+1} (\bdt)$ will
be non-zero. Hence the following lemma does not bound the expected
potential change for the expander process by that for the \OPBP (both
starting from the same state), but by this surrogate
$\widetilde{\Delta}_{-1} (\bdt) + \widetilde{\Delta}_{+1} (\bdt)$, and it is this
 surrogate sum that we bound in~\Cref{sec:beta-process}.

\subsection{The Coupling Argument} \label{sec:coupling}

We now show a coupling between the expander-greedy process and the \OPBP defined in ~\Cref{sec:setup}, to bound the expected one-step change in potential for the expander process. 
\begin{lemma}
  \label{lem:coupling}
Given an $\alpha$-expander $G = (V,E)$, let $\bdt \equiv (d_v \, : v
\in V)$ denote the current discrepancies of the vertices at any time
step $t$ for the expander-greedy process. Consider a hypothetical \OPBP on vertex set $V$ with $\beta = \alpha$, the
weight of vertex $v \in V$ set to $w_{v} = \deg(v)$, and starting from the same discrepancy state $\bdt$. Then:
\begin{OneLiners}
\item[(a)] $\Delta_{-1} (\bdt) \leq \widetilde{\Delta}_{-1} (\bdt)$,
  ~~and~~ (b) $\Delta_{+1} (\bdt) \leq \widetilde{\Delta}_{+1} (\bdt)$.
\end{OneLiners}
Hence the expected one-step change in potential $\E[ \Phi(\bd^{t+1})
- \Phi(\bdt)] \leq \widetilde{\Delta}_{-1} (\bdt) +
\widetilde{\Delta}_{+1} (\bdt)$. 
\end{lemma}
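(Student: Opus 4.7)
The plan is to reduce both inequalities to a prefix-sum/majorization comparison of two probability distributions on $V$, and then invoke the expander property of $G$ together with the choice $\beta = \alpha$.

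Write $\Delta_{-1}(\bdt) = \sum_v g(d_v)\, p_v$, where $g(d) := \cosh(\lambda(d-1)) - \cosh(\lambda d)$ and $p_v$ is the probability that the greedy algorithm decrements vertex $v$, i.e.\ that $v$ is the higher-discrepancy endpoint of a uniformly random edge of $G$. Similarly, write $\widetilde{\Delta}_{-1}(\bdt) = \sum_v g(d_v)\, q_v$ with $q_v$ the corresponding probability under the \OPBP. Because $g'(d) = \lambda[\sinh(\lambda(d-1))-\sinh(\lambda d)] < 0$, the function $g$ is strictly decreasing, so by an Abel-summation (majorization) argument it suffices to show that, ordering vertices as $d_{\pi(1)} \geq d_{\pi(2)} \geq \cdots$ and letting $S_K := \{\pi(1),\ldots,\pi(K)\}$, one has $\sum_{v \in S_K} p_v \geq \sum_{v \in S_K} q_v$ for every $K$. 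The key observation making this tractable is that since every vertex in $S_K$ has discrepancy at least every vertex outside, the higher-discrepancy endpoint of an edge $e$ lies in $S_K$ \emph{iff $e$ has at least one endpoint in $S_K$}; analogously, the greedy $-1$-vertex of an i.i.d.\ pair $(u,v)$ lies in $S_K$ iff at least one of $u,v$ is in $S_K$.

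With that reduction in hand, a short computation gives closed forms for both prefix sums. Writing $s := \vol(S_K)/(2|E|) = \vol(S_K)/W$ and using $\vol(S_K) = 2|E(S_K,S_K)| + |E(S_K,\overline{S_K})|$, one gets
\begin{equation*}
\sum_{v \in S_K} p_v \;=\; \frac{|E(S_K,S_K)| + |E(S_K,\overline{S_K})|}{|E|} \;=\; s + \frac{|E(S_K,\overline{S_K})|}{2|E|}.
\end{equation*}
For the \OPBP, the $-1$ vertex is the greedy choice with probability $\beta$ and a single product-distribution sample with probability $1-\beta$; since $\Pr[\text{at least one of }u,v \in S_K] = 1-(1-s)^2$, a direct calculation yields
\begin{equation*}
\sum_{v \in S_K} q_v \;=\; \beta\bigl(1-(1-s)^2\bigr) + (1-\beta)s \;=\; s + \beta\, s(1-s).
\end{equation*}
The $\alpha$-expansion of $G$ gives $|E(S_K,\overline{S_K})|/(2|E|) \geq \alpha \cdot \min(s,1-s)$, while $s(1-s) \leq \min(s,1-s)$. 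Setting $\beta = \alpha$ therefore forces $\beta s(1-s) \leq |E(S_K,\overline{S_K})|/(2|E|)$, which is exactly the prefix-sum inequality we needed.

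Part (b) is entirely symmetric: define $h(d) := \cosh(\lambda(d+1)) - \cosh(\lambda d)$, which is strictly increasing, and now sort vertices by \emph{increasing} discrepancy with prefix set $T_K = $ the bottom-$K$ discrepancy vertices. The same ``at least one endpoint lies in $T_K$'' identity applies to the lower-discrepancy endpoint of a random edge and to the greedy $+1$-vertex of a product-distribution pair, and the identical expansion/algebraic inequality yields $\sum_{v \in T_K} p'_v \geq \sum_{v \in T_K} q'_v$; Abel summation then gives $\Delta_{+1} \leq \widetilde{\Delta}_{+1}$. The final inequality $\E[\Phi(\bd^{t+1})-\Phi(\bdt)] \leq \widetilde{\Delta}_{-1}+\widetilde{\Delta}_{+1}$ is immediate because in the expander process exactly one vertex is incremented and one decremented per step, so the left-hand side equals $\Delta_{-1}+\Delta_{+1}$.

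The main technical hurdle I expect is nailing down the Abel-summation/majorization step cleanly: one must verify that $g$ and $h$ have the right monotonicity so that the inequality direction flips in the right way, and that the prefix-sum comparison truly holds for every $K$ (not just for some specific threshold). Once the monotonicity and the ``at-least-one-endpoint'' identity are in place, the expansion of $G$ combined with $\beta = \alpha$ and the bound $s(1-s) \leq \min(s,1-s)$ does the work in one line.
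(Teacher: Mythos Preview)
Your proof is correct and follows essentially the same approach as the paper: both reduce to the prefix-sum inequality $\Pr_G[-1 \in S_K] \geq \Pr_{\bw \times \bw}[-1 \in S_K]$ and establish it via the same volume/expansion computation (your use of $s(1-s) \leq \min(s,1-s)$ neatly avoids the paper's two-case split on $\vol(S_K)$ versus $\vol(V\setminus S_K)$). The only cosmetic difference is the final step: you apply Abel summation directly with the monotone function $g$, whereas the paper converts the stochastic dominance into a coupling, observes that the post-step discrepancy vector in the \OPBP majorizes that of the expander process, and then invokes Schur-convexity of $\Phi$ --- these are equivalent packagings of the same implication.
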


\begin{proof}
We start by renaming the vertices  in $V$ such that $d_n \leq d_{n-1} \leq \ldots \leq d_1$. Suppose the next edge   in the expander process corresponds to indices $i,j$ where $i<j$. 
	 We prove the lemma statement by two separate coupling
         arguments, which crucially depend on the following
         claim. Intuitively, this claim shows that a $-1$ is more
         likely to appear among the high discrepancy vertices of $G$ in
         the expander process than the \OPBP (thereby having a lower
         potential), and similarly a $+1$ is more likely to appear among
         the low discrepancy vertices of $G$ in the expander process
         than in the \OPBP. Peres et al.~\cite{PTW-Journal15} also prove
         a similar claim for stochastic load balancing, but they
         only consider uniform distributions.

\begin{claim} \label{cl:good-prefix}
	For any $k \in [n]$, if $S_k$ denotes the set of vertices with
        indices $k' \in [k]$ (the $k$ highest discrepancy vertices) and $T_k$ denotes $V\setminus S_k$, then
        \begin{align*}
        \Pr_{(i,j) \sim G}[-1 \in S_k] &\geq  \Pr_{(u,v) \sim \bw \times \bw}[-1
                                         \in S_k] \quad \text{and} \quad 
          \Pr_{(i,j) \sim G}[+1 \in T_k] &\geq  \Pr_{(u,v) \sim \bw \times \bw}[+1 \in T_k] \, .
        \end{align*}
       Above, we abuse notation and use the terminology `$-1 \in S_k$' to denote that the vertex whose discrepancy decreases falls in the set $S_k$ in the corresponding process.
\end{claim}
\begin{proof}
Fix an index $k$, and let $\rho:= \frac{\vol(S_k)}{\vol(V)}$ be the
relative volume of $S_k$, i.e., the fraction of edges of $G$ incident to the
$k$ nodes of highest degree.
	First we consider the \OPBP on $V$. With $(1-\beta)$, probability we assign a sign to the input vector uniformly at random. Therefore, conditioned on this choice, a vertex in $S_k$ will get a $-1$ sign with probability 
	$$ \frac{1}{2} \cdot \Pr[u \in S_k] + \frac{1}{2} \Pr[v \in
        S_k]~~ =~~ \frac{\vol(S_k)}{\vol(V)} ~~=~~ \rho,$$ where $u$ and $v$ denote the two vertices chosen by the \OPBP process. With probability $\beta$, we will use the greedy algorithm, and so $-1$ will appear on a vertex in $S_k$ iff at least one of the two chosen vertices lie in $S_k$. Putting it together, we get 
	\begin{align}  \Pr_{(u,v) \sim \bw \times \bw}[-1 \in S_k] 
          &= (1-\beta)\cdot \frac{\vol(S_k)}{\vol(V)} + \beta \cdot
          {\Pr_{(u,v) \sim \bw \times \bw} [\{u,v\} \cap S_k \neq \emptyset]} \notag \\
          &= (1-\beta) \cdot \rho + \beta \cdot
                               \left(1 - (1 - \rho)^2 \right) 	~~=~~ (1 + \beta - \beta \cdot \rho
            ) \cdot \rho. \label{eq:4}
	\end{align}
	
	Now we consider the expander process. A vertex in $S_k$ gets -1 iff the chosen edge has at least one end-point in $S_k$. Therefore, 
	\begin{align*} &\Pr_{(i,j) \sim G}[-1 \in S_k]  ~~ =~~ \Pr[i \in S_k] ~~ =~~ \frac{|E(S_k,S_k)| + |E(S_k, V\setminus S_k)|}{|E|} \\
	& \quad = \frac{\big( 2|E(S_k,S_k)| + |E(S_k, V\setminus S_k)|\big) + |E(S_k, V\setminus S_k)|}{2|E|}
	~~=~~ \frac{\vol(S_k) + |E(S_k, V\setminus S_k)|}{\vol(V)}.
	\end{align*}
	
Recalling that $\beta = \alpha$, and that $G$ is an $\alpha$-expander, we consider two cases:
	
	\textbf{Case 1}: If $\vol(S_k) \leq \vol(V\setminus S_k)$, we  use
	\begin{align*}  \Pr_{(i,j) \sim G}[-1 \in S_k]    &~~=~~ \frac{\vol(S_k) + |E(S_k, V\setminus S_k)|}{\vol(V)} \\
	&~~\geq~~  (1+\alpha)\frac{\vol(S_k)}{\vol(V)} ~~=~~ (1+\beta)\rho \geq   \Pr_{(u,v) \sim \bw \times \bw}[-1 \in S_k].
	\end{align*}
	
	\textbf{Case 2}: If $\vol(S_k) > \vol(V\setminus S_k)$, we use
	\begin{align*} 
	\Pr_{(i,j) \sim G}[-1 \in S_k]  &~~=~~ \frac{\vol(S_k) + |E(S_k, V\setminus S_k)|}{\vol(V)} ~~\geq~~  \frac{\vol(S_k) + \alpha \cdot \vol(V \setminus S_k)}{\vol(V)} \\  
	 &~~\geq~~ \Big(1 + \beta \cdot \frac{\vol(V \setminus S_k)}{\vol(V)} \Big) \cdot \rho ~~=~~   \Pr_{(i,j) \sim \bw \times \bw}[-1 \in S_k],
	\end{align*}
        where the last equality uses~(\ref{eq:4}).
	This completes the proof of $\Pr_{(i,j) \sim G}[-1 \in S_k] \geq  \Pr_{(i,j) \sim \bw}[-1 \in S_k]$. 
	One can similarly show $\Pr_{(i,j) \sim G}[+1 \in T_k] \geq  \Pr_{(u,v) \sim \bw \times \bw}[+1 \in T_k] $, which completes the proof of the claim.
\end{proof}

~\Cref{cl:good-prefix} shows that we can establish a coupling between the two
processes such that if $-1$ belongs to $S_k$ in \OPBP, then the same
happens in the expander process. In other words, there is a joint
sample space $\Omega$ such that for any outcome $\omega \in \Omega,$
if vertices $v_a$ and $v_b$ get sign $-1$ in the expander process and
the \OPBP respectively, then $a \leq b$.

Let $\bd$ and ${\widetilde \bd}$ denote the discrepancy vectors in the
expander process and the \OPBP after the -1 sign has been assigned,
respectively. Now, since both the processes start with the same discrepancy
vector $\bdt$, we see that for any fixed outcome $\omega \in \Omega,$
the vector ${\widetilde \bd}$ majorizes $\bd$ in the following sense. 

\begin{defn}[Majorization] Let ${\bf a}$ and ${\bf b}$ be two real vectors of the same length $n$. Let $\overrightarrow{{\bf a}}$ and $\overrightarrow{{\bf b}}$ denote the vectors ${\bf a}$ and ${\bf b}$ with coordinates rearranged in descending order respectively. We say that ${\bf a}$ {\em majorizes} ${\bf b}$, written ${\bf a} \succeq {\bf b}$, if for all $i$, $1 \leq i \leq n$, we have $\sum_{j=1}^i \overrightarrow{{\bf a}}_j \geq \sum_{j=1}^i \overrightarrow{{\bf b}}_j. $
\end{defn}

One of the properties of majorization~\cite{Hardy} is that any convex and symmetric function of the discrepancy vector (which $\Phi$ is)  satisfies
that $\Phi(\bd) \leq \Phi({\widetilde \bd})$. 
Thus, for any fixed outcome $\omega$, the change in potential in the expander
process is at most that of the surrogate potential in the \OPBP. Since $\Delta_{-1}(\bdt)$ and
$\widetilde{\Delta}_{-1} (\bdt)$ are just the expected change of these
quantities in the two processes (due to assignment of -1 sign), the
first statement of the lemma follows. Using an almost identical proof, we can also show the second statement. (Note that we may need to redefine the coupling
between the two processes to ensure that if vertices $v_a, v_b$ get
sign $+1$ as above, then $b \leq a$.)
\end{proof}





\subsection{Analyzing One-Step $\Delta \Phi$ of the \OPBP} \label{sec:beta-process}

Finally we bound the one-step change in (surrogate) potential of the
\OPBP starting at discrepancy vector $\bdt$; recall the definitions of $\widetilde{\Delta}_{-1}(\bdt)$ and $\widetilde{\Delta}_{+1}(\bdt)$ from~\Cref{sec:overview}.


\begin{lemma} \label{lem:beta-process}
	If $\Phi(\bdt) \leq (nT)^{10}$,
          and if the weights $w_v$ are such that for all $v$, $\frac{w_v}{\sum_{v'} w_{v'}}\geq \frac{\gamma}{n}$ (i.e.,
        the minimum weight is at least a $\gamma$ fraction of the average weight), 
        then we have that
        \[ \widetilde{\Delta}_{-1}(\bdt) +
          \widetilde{\Delta}_{+1}(\bdt) \leq O(1), \] 
    as long as $\beta \geq 6\lambda$, $\gamma \geq 16 \lambda^{1/4}$, and $\lambda = O(\log^{-4} nT)$.
\end{lemma}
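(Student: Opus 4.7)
The plan is to reduce the lemma to an inequality of the form
\[
2(\cosh\lambda-1)\,A \;-\; \beta\sinh\lambda\,B \;\leq\; O(1)
\]
for $A := \sum_v p_v \cosh(\lambda d_v)$ and $B := \sum_{u,v} p_u p_v \bigl|\sinh(\lambda d_u) - \sinh(\lambda d_v)\bigr|$, and then to control $A$ by $B$ up to an additive term that we can absorb.

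First, I would derive this closed form for $\widetilde\Delta_{-1}+\widetilde\Delta_{+1}$. The random fraction $(1-\beta)$ contributes $\E_u[f(d_u)+g(d_u)]$ where $f(x)+g(x) = \cosh(\lambda(x+1))+\cosh(\lambda(x-1))-2\cosh(\lambda x) = 2(\cosh\lambda-1)\cosh(\lambda x)$ (by the product-to-sum formula), giving $2(\cosh\lambda-1)A$. The greedy fraction saves, per ordered pair $(u,v)$ with $d_u\leq d_v$, exactly
\[
\tfrac12\bigl(f(d_u)+g(d_u)+f(d_v)+g(d_v)\bigr) - (f(d_u)+g(d_v)) \;=\; \sinh\lambda\bigl(\sinh(\lambda d_v)-\sinh(\lambda d_u)\bigr),
\]
where I used $f(x)-g(x) = 2\sinh\lambda\sinh(\lambda x)$. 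Averaging gives the second term $-\beta\sinh\lambda\cdot B$.

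Second, I would apply the elementary inequality $2(\cosh\lambda-1)\leq \lambda\sinh\lambda$ (which is $\tanh(\lambda/2)\leq \lambda/2$), so it suffices to prove $\lambda A \leq \beta B + O(1/\sinh\lambda) = \beta B + O(1/\lambda)$, since then the surrogate is at most $\sinh\lambda\cdot O(1/\lambda) = O(1)$.

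Third -- the main technical step -- I would bound $A$ by $B$ plus an additive term. Using the pointwise inequality $\cosh x \leq |\sinh x| + 1$, we have $A \leq 1 + \sum_v p_v |s_v|$ where $s_v := \sinh(\lambda d_v)$. Let $\hat m$ be a \emph{weighted median} of the $d_v$'s under $p$; then for each $v$ with $s_v\geq s_{\hat m}$ we have $\sum_u p_u|s_u-s_v| \geq \tfrac12 (s_v-s_{\hat m})$ (since at least half the $p$-mass lies below $\hat m$), and symmetrically for $s_v \leq s_{\hat m}$. Summing against $p_v$ yields $B \geq \tfrac12 \sum_v p_v|s_v - s_{\hat m}|$, and then the triangle inequality gives
\[
\sum_v p_v|s_v| \;\leq\; |s_{\hat m}| + \sum_v p_v|s_v - s_{\hat m}| \;\leq\; |s_{\hat m}| + 2B.
\]
Combining, $\lambda A \leq \lambda + \lambda|s_{\hat m}| + 2\lambda B$, so $\lambda A - \beta B \leq (2\lambda - \beta)B + \lambda|s_{\hat m}| + \lambda$. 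Since $\beta \geq 6\lambda \geq 2\lambda$, the $B$-coefficient is non-positive and what remains is to show $\lambda|s_{\hat m}| = \lambda|\sinh(\lambda\hat m)| \leq O(1/\lambda)$, i.e.\ $|\sinh(\lambda \hat m)| \leq O(1/\lambda^2)$.

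The main obstacle is precisely this last step: bounding the weighted median $\hat m$. Here is where the assumption $\gamma \geq 16\lambda^{1/4}$ together with the identity $\sum_v d_v = 0$ (total in-degree equals total out-degree) and the cap $\Phi\leq (nT)^{10}$ must be leveraged. The idea is that if $|\hat m|$ were much larger than $O(\log(1/\lambda)/\lambda)$, then at least half the $p$-mass would sit at $|d_v|$ that large, and the regularity $p_v\geq \gamma/n$ would force $\Omega(\gamma n)$ actual vertices to have $|d_v|=\Omega(|\hat m|)$. Since $\sum_v d_v = 0$, an equal amount of ``opposite sign'' discrepancy must live somewhere, and the only way this is consistent with $\Phi\leq (nT)^{10}$ is when $|\hat m|$ itself is polylog in $nT$; a careful pigeonhole using the $\lambda^{1/4}$ slack then pins down $|\sinh(\lambda\hat m)| \leq O(1/\lambda^2)$. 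I expect this quantile-counting argument, rather than the closed-form computation, to be the technical heart of the proof.

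Finally, putting everything together: $\widetilde\Delta_{-1}+\widetilde\Delta_{+1} \leq \sinh\lambda\cdot(\lambda A - \beta B) \leq \sinh\lambda\cdot\bigl((2\lambda-\beta)B + O(1/\lambda)\bigr) \leq O(1)$, as required.
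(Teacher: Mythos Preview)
Your closed-form computation (Step~1) and the inequality $2(\cosh\lambda-1)\le\lambda\sinh\lambda$ (Step~2) are correct, and the weighted-median reduction in Step~3 is valid. The gap is Step~4. Your sketched argument says: ``half the $p$-mass sits at $|d_v|\ge|\hat m|$, and the regularity $p_v\ge\gamma/n$ then forces $\Omega(\gamma n)$ actual vertices to have $|d_v|=\Omega(|\hat m|)$.'' This implication is backwards: $p_v\ge\gamma/n$ is a \emph{lower} bound on each $p_v$, so it upper-bounds the \emph{count} of any set with small $p$-mass, not of one with large $p$-mass. Nothing prevents a single vertex $u$ from carrying $p_u\approx 1-\gamma$; in that case the weighted median is simply $d_u$, and under the hypotheses $d_u$ can be as large as $\Theta(\log(nT)/\lambda)$ (with the remaining $n-1$ vertices at $d_v\approx -d_u/(n-1)$, so $\sum d_v=0$ and $\Phi\approx\cosh(\lambda d_u)\le(nT)^{10}$). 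Then $|\sinh(\lambda\hat m)|\approx(nT)^{O(1)}$, which is nowhere near the $O(1/\lambda^2)$ you need. In this same example $B\approx 2\gamma\,\sinh(\lambda d_u)$ while $A\approx\cosh(\lambda d_u)$, so even retaining the negative $(2\lambda-\beta)B$ term does not rescue the bound: the coefficient you would need is $\beta\gtrsim 1/\gamma$, not $\beta\ge 6\lambda$. So the reduction to a single global median is genuinely too lossy.

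The paper's route is different in structure. It conditions on the first sampled vertex $u$, Taylor-expands to get a linear term $L$ and quadratic term $Q$, and does the accounting separately for each $u$. The counting step (their Claim on $\Pr[v\notin\danger(u)]$) uses regularity in the \emph{opposite} direction from yours: if $\Pr_v[v\notin\danger(u)]$ were tiny, then $p_v\ge\gamma/n$ would force the \emph{complement} $\{v\notin\danger(u)\}$ to have few vertices, hence at least $n/2$ vertices lie in $\danger(u)$ with $d_v\approx d_u$; then $\sum_v d_v=0$ forces some vertex $\tilde w$ to have $|d_{\tilde w}|\gtrsim \gamma/\lambda^{3/2}$, contradicting the $\Phi$-cap. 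The second ingredient you are missing is the amplification: when $v\notin\danger(u)$, the paper shows $-\beta G$ dominates not just $Q$ but $(1+\tfrac{1}{\sqrt\lambda})Q$, and it is this extra $1/\sqrt\lambda$ factor (combined with $\Pr[v\notin\danger(u)]\ge\Omega(\sqrt\lambda)$) that pays for the uncontrolled $Q$ contribution from $v\in\danger(u)$. Your global-$A$-vs-$B$ inequality has no analogue of this amplification, which is why it breaks on the concentrated-mass example.
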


\begin{proof}
	Let
        $u$ be an arbitrary vertex in $V$, and we condition on the fact that the first vertex chosen by the \OPBP is $u$. Then, we show that 
        \begin{gather*}
          \E_{v \sim \bw} \Big[\cosh(\lambda (d_i-1)) -
          \cosh(\lambda (d_i)) + \cosh(\lambda (d_j+1)) -
          \cosh(\lambda (d_j)) \, \Big| \, u \textrm{ is sampled
            first}\Big],
        \end{gather*}
        is $O(1)$ regardless of the choice of $u$, where we assume that 
        $i$ is the random vertex which is assigned $-1$ by the \OPBP,
        and $j$ is the random vertex which is assigned $+1$. The proof of the lemma then
        follows by removing the conditioning on $u$. 

Following~\cite{BS-arXiv19,BJSS-STOC20}, we use the first two terms of the Taylor expansion of $cosh(\cdot)$ to upper bound the difference terms of the form $\cosh(x+1) - \cosh(x)$ and $\cosh(x-1) - \cosh(x)$. To this end, note that, if $|\epsilon| \leq 1$ and $\lambda < 1$, we have that 
\begin{align*}
\cosh(\lambda (x + \epsilon)) - \cosh(\lambda x) & \textstyle \leq  \epsilon \lambda \sinh(\lambda x) + \frac{\epsilon^2}{2!} \lambda^2 \cosh(\lambda x) + \frac{\epsilon^3}{3!} \lambda^3 \sinh(\lambda x) + \ldots \\
& \leq \epsilon \lambda \sinh(\lambda x) + \epsilon^2 \lambda^2 \cosh(\lambda x).
\end{align*}
Using this, we proceed to bound the following quantity (by setting $\epsilon = -1$ and $1$ respectively):
\begin{align*} 
&\E_{v \sim \bw} \Big[ \underbrace{- \lambda \big( \sinh(\lambda d_i) - \sinh (\lambda d_j) \big)}_{=: -L} +  \underbrace{\lambda^2  \big(  \cosh(\lambda d_i) + \cosh (\lambda d_j) \big)}_{=:Q} \, \Big| \, u \textrm{ is sampled first}\Big] .
\end{align*}
We refer to $L=\lambda \big( \sinh(\lambda d_i) - \sinh (\lambda d_j) \big)$ and $Q=\lambda^2 \big( \cosh(\lambda (d_i)) + \cosh (\lambda d_j) \big)$ as the \emph{linear} and \emph{quadratic} terms, since they arise from the first- and second-order derivatives in the Taylor expansion.

To further simplify our exposition, we define the following random variables:
\begin{OneLiners}
\item[(i)] $u_{>}$ is the identity of the vertex among $u,v$ with higher discrepancy, and $u_{<}$ is the other vertex. Hence we have that $d_{u_>} \geq d_{u_<}$.
\item[(ii)] $G$  denotes the random variable $\lambda \big( \sinh(\lambda d_{u_>}) - \sinh (\lambda d_{u_<}) \big)$, which indicates an analogous term to $L$, but if we exclusively did a greedy signing always (recall that the greedy algorithm would always decrease the larger discrepancy, but the \OPBP follows a uniformly random signing with probability $(1-\beta)$ and follows the greedy rule only with probability $\beta$).
\end{OneLiners}

Finally, for any vertex $w \in V$, we let $\danger(w) = \{ v :  | d_w - d_v | < \frac{2}{\lambda}\}$ to denote the set of vertices with discrepancy close to that of $w$, where the gains from the term corresponding to $\beta G$ are insufficient to compensate for the increase due to $Q$.

We are now ready to proceed with the proof. Firstly, note that, since the \OPBP follows the greedy algorithm with probability $\beta$ (independent of the choice of the sampled vertices $u$ and $v$), we have that 
\begin{eqnarray}
\label{eq:usample}
\E_v[L \mid u \textrm{ is sampled first}] ~~=~~ (1-\beta) 0 + \beta \E_v[G \mid u \textrm{ is sampled first}].
\end{eqnarray}

Intuitively, the remainder of the proof proceeds as follows: suppose $d_{u_>}$ and $d_{u_<}$ are both non-negative (the intuition for the other cases are similar). Then, $Q$ is proportional to $\lambda^2 \cosh (\lambda d_{u_>})$. Now, if $d_{u_>} - d_{u_<}$ is sufficiently large, then $G$ is proportional to $\lambda \sinh (\lambda d_{u_>})$, which in turn is close to $\lambda \cosh (\lambda d_{u_>})$. As a result, we get that as long as $\lambda = O(\beta)$, the term $- \beta G +  Q$ can be bounded by $0$ for each choice of $v$ such that $d_{u_>} - d_{u_<}$ is large.

However, what happens when $d_{u_>} - d_{u_<}$ is small, i.e., when $v$ falls in $\danger(u)$? Here, the $Q$ term is proportional to $\lambda^2 \cosh (\lambda d_{u})$, but the $G$ term might be close to $0$, and so we can't argue that $- \beta G + Q \leq O(1)$ in these events. Hence, we resort to an amortized analysis by showing that (i) when $v \notin \danger(u)$, $-\beta G$ can not just compensate for $Q$, it can in fact compensate for $\frac{1}{\sqrt{\lambda}}Q \geq  \frac{1}{\sqrt{\lambda}} \cdot \lambda^2 \cosh (\lambda d_{u})$, and secondly, (ii) the probability over a random choice of $v$ of $v \notin \danger(u)$ is at least $\sqrt{\lambda}$, provided $\Phi$ is bounded to begin with. The overall proof then follows from taking an average over all $v$.


Hence, in what follows, we will  show that in expectation the magnitude of $\beta G$ can compensate for a suitably large multiple of $Q$ when $v \notin \danger(u)$. 

\begin{claim} \label{cl:sinh-to-cosh}
Let $\beta \geq 6\lambda$.	For any fixed choice of vertices $u$ and $v$ such that $v
        \notin \danger(u)$, we have $G := \lambda \big( \sinh(\lambda d_{u_>}) - \sinh (\lambda
d_{u_<}) \big) \geq \frac{\lambda}{3} (\cosh(\lambda d_u) + \cosh(\lambda d_v) - 4)$. 
\end{claim}

\begin{proof} The proof is a simple convexity argument.
To this end, suppose both $d_u, d_v \geq
0$. Then since $\sinh(x)$ is convex when $x \geq
0$ and its derivative is $\cosh(x)$, we get that
\begin{align*}
 \sinh(\lambda d_{u_>}) - \sinh(\lambda d_{u_<}) &~~\geq~~ \lambda
  \cosh(\lambda d_{u_<}) \cdot |d_u - d_v| ~~\geq~~ 2 \cosh(\lambda
                                                   d_{u_<}), \\
  \intertext{using
 $v \notin \danger(u)$. But since $\big| |\sinh(x)| -\cosh(x) \big| \leq
  1$, we get that}
  \sinh(\lambda d_{u_>}) - \sinh(\lambda d_{u_<}) &~~\geq~~ 2\sinh(\lambda d_{u_<}) - 2.
\end{align*}
Therefore, $\sinh(\lambda d_{u_<}) \leq \frac13(\sinh(\lambda
d_{u_>})+1)$. Now substituting, and using the monotonicity of $\sinh$ and its closeness to
$\cosh$, we get $G$ is at least
$$  \frac{2 \lambda}{3} \left( \sinh(\lambda d_{u_>}) - 1
\right) ~\geq~ \frac{\lambda}{3} \left( \sinh(\lambda d_{u_>}) +
  \sinh(\lambda d_{u <}) - 2 \right) ~\geq~ \frac{\lambda}{3} \Big(
  \cosh(\lambda d_{u}) + \cosh(\lambda d_{v}) -
  4 \Big).$$ The case of $d_u, d_v \leq 0$ follows from setting $d_u'
= |d_u|, d_v' = |d_v|$ and using the above calculations, keeping in
mind  that
$\sinh$ is an odd function but $\cosh$ is even.  Finally, when $d_{u
  <}$ is negative but $d_{u >}$ is positive, 
\begin{align*}
  G &~~=~~ \lambda( \big( \sinh(\lambda d_{u_>}) - \sinh (\lambda d_{u_<})
  \big) ~~=~~ \lambda \big( \sinh(\lambda d_{u_>}) + \sinh (\lambda
  |d_{u_<}|) \big) \\ 
  &~~\geq~~ \frac\lambda3 \big( \cosh(\lambda d_{u_>}) +
  \cosh (\lambda d_{u_<}) - 2 \big) ~~\geq~~ \frac{\lambda}{3} \Big(
  \cosh(\lambda d_{u}) + \cosh(\lambda d_{v}) -
  4\Big).   \qedhere
\end{align*}
\end{proof}

\begin{claim} \label{cl:no-danger}
Let $\beta \geq 6\lambda$.	For any fixed choice of vertices $u$ and $v$ such that $v
        \notin \danger(u)$, we have $ -\beta G + \left( 1 +
            \frac{1}{\sqrt{\lambda}} \right) Q \leq O(1) $. 
\end{claim}
\begin{proof}
Recall that $G = \lambda \big( \sinh(\lambda d_{u_>}) - \sinh (\lambda
d_{u_<}) \big)$. Now, let $A$ denote $ \cosh(\lambda d_{u}) + \cosh(\lambda
d_{v}).$ Then, by definition of $Q$ and from~\Cref{cl:sinh-to-cosh}, we have that 
$$ -\beta G + \left( 1 + \frac{1}{\sqrt{\lambda}} \right)  Q ~\leq~
-\frac{\beta\lambda}{3} (A - 4) +  \left( 1 + \frac{1}{\sqrt{\lambda}} \right)
\lambda^2 A ~\leq~ \frac{4\lambda \beta}{3} + \left(\lambda^2 +
  \lambda^{\frac{3}{2}} - \frac{\lambda \beta}{3} \right) A ~\leq~
\lambda \beta $$ is at most $O(1)$, assuming $\beta \geq  6 \lambda \geq 3
(\lambda+\sqrt{\lambda})$, and recalling that $\lambda, \beta$ are at most 1. 
\end{proof}

We now proceed with our proof using two cases:


\medskip \noindent {\bf Case (i):} $|d_u| \leq \frac{10}{\lambda}$. In
this case, note that the $Q$ term is
\begin{align*}
  &\E_v[Q \mid u \textrm{ is sampled first}] \\
  & =  \E_v[Q \mid v \in \danger(u), \, u \textrm{ is sampled first}]  \cdot \Pr[v \in \danger(u) \mid u \textrm{ is sampled first}]  \\ 
& ~~~~  + \E_v[Q \mid v \notin \danger(u) \, u \textrm{ is sampled first}] \cdot \Pr[v \notin \danger(u) \mid u \textrm{ is sampled first}] & \\ 
 & \leq  O(1) + \E_v[Q \mid v \notin \danger(u) ~,~ u \textrm{ is sampled first}] \cdot \Pr[v \notin \danger(u) \mid u \textrm{ is sampled first}] .
\end{align*}
Here the inequality uses  $v \in \danger(u)$ and $|d_u| \leq
\frac{10}{\lambda}$ to infer that that both $|d_u|$ and $|d_v|$ are $ \leq
\frac{12}{\lambda}$. Hence the $Q$ term in this scenario will simply be a constant.

Next we analyze the $L$ term. For the following, we observe that the algorithm chooses a random $\pm 1$ signing with probability $(1-\beta)$, and chooses the greedy signing with probability $\beta$, and moreover, this choice is independent of the random choices of $u$ and $v$. Hence, the expected $L$ term conditioned on the algorithm choosing a random signing is simply $0$, and the expected $L$ term conditioned on the algorithm choosing the greedy signing is simply the term $\E[G]$. Hence, we can conclude that: 
\begin{align*}
  &\E_v[-L \mid u \textrm{ is sampled first}] \\
  &=   \E_v[-L \mid v \in \danger(u),  u \textrm{ is sampled first}] \cdot \Pr[v \in \danger(u) \mid u \textrm{ is sampled first}]  \\ 
& ~~~~ + \E_v[-L \mid v \notin \danger(u) ~,~ u \textrm{ is sampled first}] \cdot \Pr[v \notin \danger(u) \mid u \textrm{ is sampled first}] & \\ 
& \leq  \E_v[-\beta G \mid v \notin \danger(u) ~,~ u \textrm{ is sampled first}] \cdot \Pr[v \notin \danger(u) \mid u \textrm{ is sampled first}] .
\end{align*}
Adding the inequalities and applying~\Cref{cl:no-danger}, we get  
$
\E_v[-L + Q \, | \, u \textrm{ is sampled first}] \leq  O(1).$

\medskip \noindent {\bf Case (ii):} $|d_u| > \frac{10}{\lambda}$. 
We first prove two easy claims.
\begin{claim}
\label{cl:easy}
Suppose $v \in \danger(u).$ Then $\cosh(\lambda d_v) \leq 8 \cosh(\lambda d_u). $
\end{claim}
\begin{proof}
  Assume w.l.o.g.\ that $d_u, d_v \geq 0$. Also, assume that
  $d_v \geq d_u,$ otherwise there is nothing to prove. Now 
  $d_v \leq d_u + \frac{2}{\lambda}.$ So
  $\frac{\cosh(\lambda d_v)}{\cosh(\lambda d_u)} \leq \sup_x
  \frac{\cosh(x+2)}{\cosh(x)}$. The supremum on the right happens
  when $x \to \infty$, and then the ratio approaches $e^2 < 8$.
\end{proof}

\begin{claim} \label{cl:good-prob}
    For any discrepancy vector $\bdt$ such that $\Phi(\bdt) \leq O((nT)^{10})$, and for any $u$ such that $|d_u| > \frac{10}{\lambda}$, we have $\Pr[v \notin \danger(u)] \geq 8 \sqrt{\lambda}$, as long as $\lambda = O(\log^{-4} nT)$.
\end{claim}
\begin{proof}
      We consider the case that $d_u > \frac{10}{\lambda}$; the 
    case were $d_u < - \frac{10}{\lambda}$ is similar. 
    
  Assume for a contradiction that
  $\Pr[v \in \danger(u)] \geq 1- 8 \sqrt{\lambda}$, and so $\Pr[v \notin \danger(u)] \leq 8 \sqrt{\lambda}$. 
  We first show that the cardinality of the set $| w \notin \danger(u) | $ is small. Indeed, this follows immediately from our assumption on the minimum weight of any vertex in the statement of~\Cref{lem:beta-process} being at least $\gamma/n$ times the total weight. So we have that for every $w$, the probability of sampling $w$ in the \OPBP is at least $\pi_w \geq \gamma/n$, implying that the total number of vertices not in $\danger(u)$ must be at most $ \frac{8\sqrt{\lambda} \cdot n}{\gamma}$. 
  This also means that the total number of vertices in $\danger(u) \geq \frac{n}{2}$ since $\gamma \geq {\lambda}^{1/4} \geq 16 \sqrt{\lambda}$ for sufficiently small $\lambda$.

Since $d_u > \frac{10}{\lambda}$, we get that any vertex $v \in \danger(u)$ satisfies
  $d_v \geq d_u - \frac2\lambda \geq \frac{8}{\lambda}$.  Moreover, since
  $\sum_{v} d_v = 0$, it must be that the negative discrepancies must in total compensate for the total sum of discrepancies of the vertices in $\danger(u)$. Hence, we have that 
$  \sum_{w : d_w < 0} |d_w| ~~\geq~~ \sum_{v \in \danger(u)}d_v ~~\geq~~  |\{ v~:~ v \in \danger(u)\}|  \cdot \frac{8}{\lambda} ~~\geq~~ 0.5n \cdot \frac{8}{\lambda}$.

From the last inequality, and since $| \{w : d_w < 0 \}| \leq |\{w ~:~ w \not\in \danger(u)\}|
\leq \frac{ 8 \sqrt{\lambda} n}{\gamma}$, we get that there exists a vertex $\widetilde{w}$ s.t $d_{\widetilde{w}} < 0$ and $| d_{\widetilde{w}} | \geq \frac{\gamma}{8 \sqrt{\lambda}
  n} \cdot  \frac{4n}{\lambda} = \frac{\gamma}{2 \lambda^{3/2}} $. But this
implies $\Phi(\bd_t) \geq \cosh(\lambda d_{\widetilde{w}}) \geq \cosh
\left( \frac{\gamma}{2 \sqrt{\lambda}} \right) > (nT)^{10}$, using that
$\lambda = O(\log^{-4} nT)$ and that $\gamma \geq \lambda^{1/4}$. So we get a contradiction on the assumption that $\Phi(\bdt) \leq (nT)^{10}$. 
\end{proof}

Returning to the proof for the case of $|d_u| \geq
\frac{10}{\lambda}$, we get that
\begin{align*}
  &\E_v[Q \mid u \textrm{ is sampled first}] \\
       &=   \E_v[Q \mid v \in \danger(u) ~,~ u \textrm{ is sampled first}] \cdot \Pr[v \in \danger(u) \mid u \textrm{ is sampled first}]  \\ 
& \quad + \E_v[Q \mid v \notin \danger(u) ~,~ u \textrm{ is sampled first}] \cdot \Pr[v \notin \danger(u) \mid u \textrm{ is sampled first}] & \\ 
& \leq  8\lambda^2 \cosh(\lambda d_u) \\
& \quad + \E[Q \mid v \notin \danger(u) ~,~ u \textrm{ is sampled first}] \cdot \Pr[v \notin \danger(u) \mid u \textrm{ is sampled first}] ,
\end{align*}
where the first term in inequality follows from Claim~\ref{cl:easy}. 

Next we analyze the $L$ term similarly:
\begin{align*}
&\E_v[-L \mid \, u \textrm{ is sampled first}]\\
&=   \E_v[-L \mid v \in \danger(u), \, u \textrm{ is sampled first}] \cdot \Pr[v \in \danger(u) \, u \textrm{ is sampled first}]  \\ 
& \qquad  + \E_v[-L \mid v \notin \danger(u) ~,~ u \textrm{ is sampled first}] \cdot \Pr[v \notin \danger(u) \, u \textrm{ is sampled first}] & \\ 
& \leq   \E_v[-\beta G \mid v \notin \danger(u) ~,~ u \textrm{ is sampled first}] \cdot \Pr[v \notin \danger(u) \mid u \textrm{ is sampled first}],
\end{align*}
where the last inequality follows using the same arguments as in case~(i). 
Adding these inequalities and applying~\Cref{cl:no-danger}, we get that 
\begin{align*}
\E_v[-L + Q \mid u \textrm{ is sampled first}] &~~\leq~~ O(1) + 8 \lambda^2 \cosh(\lambda d_u) \\
&\hspace{-3cm} - \frac{1}{\sqrt{\lambda}} \cdot \E_v[ Q \mid  u \textrm{ is sampled first}] \cdot \Pr[v \notin \danger(u) \mid u \textrm{ is sampled first}].
\end{align*}
To complete the proof  of \Cref{lem:beta-process}, we note that $Q \geq \lambda^2 \cosh(\lambda
d_u)$, and use Claim~\ref{cl:good-prob} to infer that $\Pr[v \notin \danger(u)] \geq 8 \sqrt{\lambda}$. This implies
\[ \E_v[-L + Q \mid u \textrm{ is sampled first}] ~\leq~ O(1) + 8 \lambda^2 \cosh(\lambda d_u) 
- 8 \lambda^2 \cosh(\lambda d_u) ~\leq~ O(1). \qedhere \]
\end{proof}





We now can use this one-step expected potential change for the \OPBP
to get the following result for the original expander process: 

\begin{proof}[Proof of~\Cref{thm:main}]
Combining \Cref{lem:beta-process} and \Cref{lem:coupling}, we get that
in the expander process, if we condition on the random choices made
until time $t$, if $\Phi(\bdt) \leq (nT)^{10}$, then $\E[ \Phi(\bd^{t+1}) - \Phi(\bdt)] \leq C$ for some constant $C$.
  The potential starts off at
  $n$, so if it ever exceeds $C\,T\,(nT)^5$ in
  $T$ steps, there must be a time $t$ such that $\Phi(\bd^t) \leq
  C\,t\,(nT)^5$ and the increase is at least
  $C(nT)^5$. But the expected increase at this step is at most
  $C$, so by Markov's inequality the probability of increasing by
  $C(nT)^5$ is at most
  $1/(nT)^5$.  Now a union bound over all times
  $t$ gives that the potential exceeds
  $C\,T\,(nT)^5 \leq (nT)^{10}$ with probability at most $T/(nT)^5 = 1/\poly(nT)$.
  But then $\cosh(\lambda d^t_v) \leq
    (nT)^{10}$, and therefore $d^t_v \leq O(\lambda \log (nT)^{10}) = O(\log^3 
  nT) $ for all vertices $v$ and time $t$.
\end{proof}

In summary, if the underlying graph is
$\gamma$-weakly-regular for $\gamma \geq \Omega(\log^{-1}
nT)$, and has expansion $\alpha \geq \Omega(\log^{-2} nT)$, the greedy
process maintains a poly-logarithmic discrepancy. 


\subsection{Putting it Together} \label{sec:tieup}
We briefly describe the expander decomposition procedure and  summarize the final algorithm. 

\begin{theorem}[Decomposition into Weakly-Regular Expanders]
  \label{thm:regular-expanders}
  Any graph $G = (V,E)$ can be decomposed into an edge-disjoint union
  of smaller graphs $G_1 \uplus G_2 \ldots \uplus G_k$ such that each
  vertex appears in at most $O(\log^2 n)$ many smaller graphs, and (b)
  each of the smaller subgraphs $G_i$ is a $\frac{\alpha}{4}$-weakly regular
  $\alpha$-expander, where $\alpha = O(1/\log n)$.
\end{theorem}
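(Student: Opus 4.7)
The plan is to combine a standard expander-decomposition primitive with a vertex-trimming step that enforces weak regularity, applied recursively to the edges left over at each level. Throughout, the quantity to track carefully is the per-vertex multiplicity across the final collection of subgraphs.

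First I would invoke a standard expander-decomposition primitive (such as Kannan--Vempala--Vetta or Saranurak--Wang): given any graph $H$ with parameter $\phi = \Theta(1/\log n)$, it produces a vertex partition $V(H) = V_1 \uplus \cdots \uplus V_k$ such that each induced subgraph $H[V_i]$ has conductance at least $\phi$, while the total number of inter-cluster edges is at most $\phi \cdot \mathrm{polylog}(n) \cdot |E(H)|$, which can be made at most $|E(H)|/2$ by appropriate choice of $\phi$. The clusters at this point are $\phi$-expanders but may contain vertices of unusually low degree inside their cluster, so they need not yet be weakly regular.

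Second, for each cluster $V_i$ I would iteratively peel off any vertex whose degree inside the current induced subgraph falls below $(\phi/4)$ times the current average degree of the cluster, moving its incident edges to a ``leftover'' set. A geometric-series argument over the peeling rounds bounds the total volume of peeled vertices per cluster by a small constant fraction of $\vol(V_i)$, so that the remaining trimmed cluster (i) is $(\phi/4)$-weakly regular by construction, and (ii) remains an $\Omega(\phi)$-expander, since for any cut $(A,B)$ in the trimmed cluster, the cut-edges in the original $H[V_i]$ are only inflated by at most $\vol(\text{peeled})$, which is dominated by $\phi \vol(A)$ for the smaller side. Setting $\alpha = \Theta(\phi) = \Theta(1/\log n)$ then gives the claimed expansion and weak-regularity guarantees for each produced subgraph.

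Third, I would collect all leftover edges --- both the inter-cluster edges from the partition step and the peeled edges from the trimming step --- into a residual graph $H'$ and recurse. Since $|E(H')| \leq |E(H)|/2$ at every level, the recursion terminates in $L = O(\log n)$ levels, and every edge ends up inside some trimmed weakly regular expander $G_i$. For multiplicity, note that at each recursion level a vertex belongs to at most one trimmed cluster, because either it survives peeling and lies in exactly one $V_i \setminus S_i$, or it is trimmed and its edges migrate to the residual graph for the next level. The remaining logarithmic factor --- yielding the claimed $O(\log^2 n)$ multiplicity rather than $O(\log n)$ --- arises because the expander-decomposition primitive itself is naturally implemented as a recursive / hierarchical scheme which may contribute its own $O(\log n)$ overhead in vertex appearances, and the product of the two logarithmic depths is what we pay.

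The main technical obstacle is the verification of expansion preservation under trimming: one must show that deleting a small-volume vertex set $S$ from a $\phi$-expander $H[V_i]$ still leaves an $\Omega(\phi)$-expander on $V_i \setminus S$, where the notion of volume in the child graph $H[V_i \setminus S]$ is consistent with the definition of $\alpha$-expansion used throughout the paper. This is a by-now standard ``trimming lemma'' in the expander-decomposition literature; the argument reduces to a cut-counting calculation showing that boundary edges lost to $S$ are at most $\vol(S) \leq (\phi/4)\vol(V_i)$, which is dominated by $\phi$ times the volume of the smaller side of any cut in the trimmed cluster.
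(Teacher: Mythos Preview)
Your route is different from the paper's, and the trimming step as argued has a real gap. You write that ``boundary edges lost to $S$ are at most $\vol(S) \leq (\phi/4)\vol(V_i)$, which is dominated by $\phi$ times the volume of the smaller side of any cut in the trimmed cluster.'' But the smaller side $A$ of a cut in $H[V_i\setminus S]$ can have volume arbitrarily small relative to $\vol(V_i)$, so a global bound $\vol(S)\le(\phi/4)\vol(V_i)$ says nothing about $|E(A,S)|$ versus $\phi\,\vol(A)$ for a particular small $A$. The trimming lemmas you allude to (e.g., Saranurak--Wang) peel vertices that have lost a constant fraction of \emph{their own} degree, which yields the per-vertex guarantee $\deg_{H[T]}(v)\ge\frac12\deg_H(v)$ and hence controls $|E(A,S)|$ by $\vol_H(A)$ for every $A$; peeling by degree-versus-\emph{average} gives no such per-vertex control, and it is not clear the resulting induced subgraph is an $\Omega(\phi)$-expander at all.

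The paper avoids this by a different two-stage reduction. First, a degree-threshold peel (thresholds $n/2,n/4,\ldots$) splits the edge set into $O(\log n)$ \emph{$2$-uniformly-dense} pieces, each with min degree at least half the average and with no induced subgraph more than twice as dense. Second, on each such piece it runs recursive sparsest-cut but replaces every cut edge by a \emph{self-loop} at each endpoint, so every vertex's volume is preserved exactly throughout the recursion. Weak regularity then falls out with no trimming: in an output $\alpha$-expander $H'$, applying expansion to the singleton $\{v\}$ gives $\deg_{H'}(v)\ge\alpha\cdot\vol_{H'}(v)=\alpha\cdot\deg_H(v)$, and uniform density of $H$ relates $\deg_H(v)$ to the average degree of $H'$ within a factor $4$. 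The $O(\log^2 n)$ multiplicity is one $\log n$ from the degree-threshold peel times one $\log n$ from an outer recursion on the at most $m/2$ inter-cluster edges.

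Separately, your accounting for the second logarithm is off: the expander-decomposition primitive outputs a vertex \emph{partition}, so each vertex lies in exactly one cluster per level regardless of how the primitive is implemented internally. If your trimming argument went through, your scheme would actually yield $O(\log n)$ multiplicity, not $O(\log^2 n)$.
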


The proof is in~\Cref{sec:expand-decomp}. So, given a graph $G=(V,E),$ we use Theorem~\ref{thm:regular-expanders} to partition the edges into a union of 
$\frac{\alpha}{4}$-weakly regular $\alpha$-expanders, namely $H_1, \ldots, H_s,$ where $\alpha= O(1/\log n)$. Further, each vertex in $V$ appears in at most $O(\log^2 n)$ of these expanders. For each graph $H_i$, we run the greedy algorithm independently. More formally, when an edge $e$ arrives, it belongs to exactly one of the subgraphs $H_i$. We orient this edge with respect to the greedy algorithm running on $H_i$.~\Cref{thm:main} shows that  the discrepancy of each vertex in $H_i$ remains $O(\log^5 (nT))$ for each time $t \in [0 \ldots T]$ with high probability. Since each vertex in $G$ appears in at most $O(\log^2 n)$ such expanders, it follows that the discrepancy of any vertex in $G$ remains $O(\log^7 n + \log^5 T)$ with high probability. This proves~\Cref{thm:final}.



\section{Expander Decomposition}
\label{sec:expand-decomp}

Finally, in this section, we show how to decompose any graph into an
edge-disjoint union of weakly-regular expanders such that no vertex
appears in more than $O(\log^2 n)$ such expanders. Hence, running the
algorithm of the previous section on all these expanders independently
means that the discrepancy of any vertex is at most $O(\log^2 n)$ times
the bound from \Cref{thm:main}, which is $O(\poly\log nT)$ as claimed.
The expander decomposition of this section is not new: it follows from
\cite[Theorem~5.6]{BBGNSSS}, for instance. We give it here for
the sake of completeness, and to explicitly show the bound on the number of
expanders containing any particular vertex.

Recall from \S\ref{sec:notation} that a $\gamma$-weakly-regular
$\alpha$-expander $G = (V,E)$ with $m := |E|$ edges and $n := |V|$
vertices is one where (a) the minimum degree is at least $\gamma$
times the average degree $d_{\rm avg} = \frac{2 m}{n}$, and (b) for
every partition of $V$ into $(S, V \setminus S)$, we have that
$| E(S, V \setminus S) | \geq \alpha \min (\vol(S), \vol(V\setminus
S))$. The main result of this section is the following:

\subsection{Proof of~\Cref{thm:regular-expanders}}

We begin our proof with a definition of what we refer to as \emph{uniformly-dense} graphs.
\begin{defn}[Uniformly Dense Graphs]
	A graph $H = (V,E)$ is \emph{$\alpha$-uniformly-dense} if (i) the minimum degree of the graph $H$ is at least $1/\alpha$ times its average degree $\frac{2m}{n}$, and (ii) no induced sugraph is much denser than $H$, i.e., for every subset $S \subseteq V$, the average degree of the induced sub-graph $\frac{2 E(S,S)}{|S|}$ is at most $\alpha$ times the average degree of $H$ which is $\frac{2m}{n}$. 
\end{defn}

We first provide a procedure which will partition a graph $G$ into edge-disjoint smaller graphs such that each of the smaller graphs is uniformly-dense, and moreoever each vertex participates in $O(\log n)$ such smaller graphs. We then apply a standard expander decomposition on each of the smaller graphs to get our overall decomposition.

\begin{lemma}[Reduction to Uniformly-Dense Instances] \label{lem:uniformly-dense}
	Given any graph $G = (V,E)$, we can decompose it into an edge-disjoint union of smaller graphs $G_1 \uplus G_2 \ldots \uplus G_{\ell}$  such that each vertex appears in at most $O(\log n)$ many smaller graphs, and (b) each of the smaller subgraphs is $2$-uniformly-dense.
\end{lemma}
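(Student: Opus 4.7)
\medskip

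The plan is to iteratively peel off one $2$-uniformly-dense subgraph at a time, using an approximate densest-subgraph routine. Initialize $G^{(0)} := G$, and for each stage $i \geq 1$ while $G^{(i-1)}$ has any edge, apply Charikar's classical $2$-approximation algorithm (iteratively delete the minimum-degree vertex, tracking the density at each step, and return the intermediate subgraph of highest recorded density) on $G^{(i-1)}$ to produce $H_i$. Set $G_i := H_i$, update $G^{(i)} := G^{(i-1)} \setminus E(H_i)$, and iterate until no edges remain.

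To verify each $G_i$ is $2$-uniformly-dense, let $\rho_i := |E(H_i)|/|V(H_i)|$ so the average degree of $H_i$ equals $2\rho_i$. A standard argument gives $\min\deg(H_i) \geq \rho_i$: if the vertex chosen next by Charikar's algorithm had degree strictly below $\rho_i$, then removing it would produce a denser subgraph, contradicting that $H_i$ was the density-maximizer along the greedy peeling. This verifies condition~(i). For condition~(ii), Charikar's $2$-approximation guarantees $\rho_i \geq \rho^{*}(G^{(i-1)})/2$, where $\rho^{*}(G^{(i-1)})$ is the true max density; consequently every subgraph of $H_i \subseteq G^{(i-1)}$ has density at most $\rho^{*}(G^{(i-1)}) \leq 2\rho_i$, and hence average degree at most $4\rho_i = 2 \cdot (\text{avg degree of } H_i)$.

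The main task is bounding the number of stages that contain any particular vertex $v$. The key observation is that whenever $v \in V(H_i)$, it loses $\deg_{H_i}(v) \geq \rho_i$ edges, and one argues that this removal is a constant fraction of $v$'s current degree in $G^{(i-1)}$; summing over stages via an amortized potential like $\sum_u \log(1 + \deg_{G^{(i)}}(u))$ then yields that $v$ appears in at most $O(\log n)$ of the $G_i$'s. This follows the accounting of \cite[Theorem~5.6]{BBGNSSS}.

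The principal obstacle is establishing the constant-factor per-participation drop in $v$'s degree. Notably, this fails for peeling of the \emph{exact} maximum-densest subgraph: nested cliques of arithmetically decreasing sizes sharing a common vertex give $\Omega(\sqrt{n})$ stages containing that vertex. So Charikar's $2$-approximation (or some similar approximation) is essential to the argument, together with a careful amortized analysis that exploits the ratio-$2$ bound between $\rho_i$ and $\rho^{*}(G^{(i-1)})$ to force geometric decrease in $v$'s residual degree across successive participations.
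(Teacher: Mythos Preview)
Your verification that each $H_i$ is $2$-uniformly-dense is correct, but the participation bound has a genuine gap: Charikar's greedy does \emph{not} avoid the nested-clique obstruction you yourself identified. Take cliques of sizes $k, k-2, k-4, \ldots, 2$ all sharing a single vertex $v$, so $n = \Theta(k^2)$. Charikar's peeling removes the minimum-degree vertex at each step, and since the non-$v$ vertices of the smallest surviving clique always have the smallest degree, the algorithm strips the cliques from smallest to largest. A direct check shows the density along this trajectory is maximized exactly when only $K_k$ remains, so Charikar returns $H_1 = K_k$---which is in fact the exact densest subgraph here. Iterating, $H_2 = K_{k-2}$, $H_3 = K_{k-4}$, and so on, and $v$ lies in all $k/2 = \Theta(\sqrt{n})$ of them. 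Concretely, when $K_j$ is peeled, $v$ loses $j-1$ edges out of a residual degree of $\Theta(j^2)$, a $\Theta(1/j)$ fraction; so neither the ``constant-fraction drop'' claim nor the $\sum_u \log(1+\deg(u))$ potential yields an $\Omega(1)$ decrement per participation. The $2$-approximation ratio buys you nothing, since Charikar happens to be exact on this instance.

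The paper's proof takes a different route that sidesteps this entirely. Rather than peeling one dense piece at a time, it runs $O(\log n)$ rounds with geometrically decreasing degree thresholds $\bar d \in \{n/2, n/4, \ldots\}$. In the round with threshold $\bar d$, it repeatedly deletes from the current residual graph any vertex of degree below $\bar d$, and outputs each surviving connected component. The key inductive claim is that at the start of the $\bar d$-round the residual graph contains no induced subgraph of average degree exceeding $2\bar d$: any such subgraph would contain a minimal one with \emph{minimum} degree at least $2\bar d$, which would therefore have survived the previous round's deletions intact and been output then. Hence every component output in this round has minimum degree at least $\bar d$ while all of its induced subgraphs have average degree at most $2\bar d$, giving $2$-uniform density. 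The $O(\log n)$ participation bound is then immediate, since each vertex lies in at most one component per round and there are $O(\log n)$ rounds.
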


\begin{proof}
	The following algorithm describes our peeling-off procedure which gives us the desired decomposition.
	
	\begin{algorithm}
		\caption{Input: Graph $G = (V,E)$}
		\label{alg:peeling-off}
		\begin{algorithmic}[1]
			\State initialize the output collection $\cop := \emptyset$. \label{alg:peeling-step1}
			\For {$\bar{d} \in \{ \frac{n}{2}, \frac{n}{4}, \ldots, 32 \}$ in decreasing order}  \label{alg:peeling-step1a}
			\State define the residual graph $R := (V, E_R)$, where $E_R = E \setminus \cup_{G_i = (V_i, E_i) \in \cop} E_i$ is the set of residual edges. \label{alg:peeling-step2}
			\While {there exists vertex $v \in R$ such that $0 < d_R(v) < \bar{d}$} \label{alg:peeling-step3}
			\State delete all edges incident to $v$ from $R$ making $v$ an isolated component. \label{alg:peeling-step4}
			\EndWhile
			\State add each non-trivial connected component in $R$ to $\cop$.  \label{alg:peeling-step5}
			\EndFor
		\end{algorithmic}
	\end{algorithm}

It is easy to see that in any iteration (step~\ref{alg:peeling-step1a}) with degree threshold $\bar{d}$, if a sub-graph $G_i = (V_i, E_i)$ is added to $\cop$ in step~\ref{alg:peeling-step5}, it has minimum degree $\bar{d}$. The crux of the proof is in showing that the average degree of $G_i$ (and in fact of any induced sub-graph of $G_i$) is at most $2 \bar{d}$. Intuitively, this is because the peeling algorithm would have already removed all subgraphs of density more than $2 \bar{d}$ in the previous iterations. We formalize this as follows:

	\begin{claim} \label{cl:peeling-invariant}
		Consider the iteration (step~\ref{alg:peeling-step1a})  when the degree threshold is $\bar{d}$. Then, the residual graph $R$ constructed in step~\ref{alg:peeling-step2} does not have any induced subgraph $S$ of density greater than $2 \bar{d}$.
	\end{claim}
	
	\begin{proof}
		Indeed, for contradiction, suppose there was a subset of vertices in $R$ with average induced degree greater than $2 \bar{d}$. Consider the minimal such subset $S$. Due to the minimality assumption, we in fact get a stronger property that \emph{every vertex in $S$} has induced degree (within $S$) of at least $2 \bar{d}$ (otherwise, we can remove the vertex with minimum induced degree and get a smaller subset $S' \subseteq S$ which still has average induced degree more than $2 \bar{d}$, thereby contradicting the minimality assumption of $S$). 
		
		For ease of notation, let us denote the set of edges induced by $S$ in the graph $R$ as $E_R(S)$. We now claim that all of these edges $E_R(S)$ \emph{should not belong to the residual graph $R$} for this iteration, thereby giving us the desired contradiction. To this end, consider the previous iteration of step~\ref{alg:peeling-step1a} with degree threshold $2 \bar{d}$. Clearly, all of the edges in $E_R(S)$ belong to the residual subgraph for this iteration as well. And consider the first point in the while loop~\ref{alg:peeling-step3} where any edge from $E_R(S)$ is deleted. At this point, note that all the vertices in $S$ must have a degree of strictly greater than $2 \bar{d}$ since even their induced degree in $E_R(S)$ is at least $2 \bar{d}$. Therefore, this gives us an immediate contradiction to any of these edges being deleted in the previous iteration, and hence they would not be present in the current iteration with degree threshold $\bar{d}$. 
	\end{proof}

It is now easy to complete the proof of \Cref{lem:uniformly-dense}. Indeed, we first show that every smaller graph added to $\cop$ in our peeling procedure is $2$-uniformly-dense. To this end, consider any non-trivial connected component added to $\cop$ during some iteration with degree threshold $\bar{d}$. From~\Cref{cl:peeling-invariant}, we know that this component has average degree at most $2 \bar{d}$, and moreover, every vertex in the component has degree at least $\bar{d}$ (otherwise it would be deleted in our while loop). Moreover, every sub-graph induced within this connected component must also have density at most $2 \bar{d}$ again from~\Cref{cl:peeling-invariant}. This then shows that the component added is $2$-uniformly dense.
Finally, 
each vertex participates in at most one non-trivial connected
component in each iteration of step~\ref{alg:peeling-step1a}, and hence
each vertex is present in $O(\log n)$ smaller sub-graphs. Hence the
proof of \Cref{lem:uniformly-dense}.
\end{proof}

Next, we apply a standard divide-and-conquer approach to partition a given 2-uniformly-dense graph $H = (V,E)$ with $m$ edges and $n$ vertices into a vertex-disjoint union of $\alpha$-expanders $H_1 := (V_1, E_1) \uplus H_2 := (V_2, E_2) \ldots \uplus H_k := (V_k, E_k)$, such that the total number of edges in $E$ which are not contained in these expanders is at most $m/2$, and moreover, the induced degree of any vertex in the expander it belongs to is at least $\alpha$ times its degree in $H$.

\begin{lemma}[Decomposition for Uniformly-Dense Graphs] \label{lem:expander-decomp}
	Given any $2$-uniformly-dense graph $H = (V,E)$ with $n$ vertices and $m$ edges, we can decompose the vertex-set $V$ into $V_1 \uplus V_2 \ldots \uplus V_\ell$  such that each induced subgraph $H_i = (V_i, E(V_i))$ is an $\frac{\alpha}{4}$-weakly-regular $\alpha$-expander, and moreover, the total number of edges of $H$ which go between different parts is at most $(2 \alpha \log n ) \, m$. Here $\alpha$ is a parameter which is $O(1/\log n)$. 
\end{lemma}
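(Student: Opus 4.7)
The plan is to recursively partition $H$ by finding sparse cuts, while trimming low-degree vertices against a \emph{fixed} threshold derived from the original $H$. Set $d^{\star} := d_{\rm avg}(H) = 2m/n$. I would define a recursive procedure $\mathsf{Decompose}(S)$ on a vertex subset $S \subseteq V$ that proceeds in three steps: (i)~\emph{trim} by repeatedly deleting any vertex $v \in S$ with $\deg_{H[S]}(v) < (\alpha/2)\,d^{\star}$ until no such vertex remains; (ii)~if the trimmed induced graph $H[S]$ is an $\alpha$-expander, output $S$ as one part of the final partition; (iii)~otherwise locate a sparse cut $(A, S\setminus A)$ with $|E(A, S\setminus A)| < \alpha\,\min\{\vol_{H[S]}(A),\,\vol_{H[S]}(S\setminus A)\}$ and recurse on both sides. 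The top-level call is $\mathsf{Decompose}(V)$. Each internal call either outputs a piece or genuinely splits $S$ into two nonempty parts, so the recursion terminates with at most $n$ leaves.

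The analysis splits into two parts. \emph{Weak regularity:} whenever step~(ii) outputs a piece $H_i = H[V_i]$, every $v \in V_i$ satisfies $\deg_{H_i}(v) \geq (\alpha/2)\,d^{\star}$ by construction. Since $H$ is $2$-uniformly-dense, the induced subgraph $H_i$ has $d_{\rm avg}(H_i) \leq 2 d^{\star}$, so the minimum degree of $H_i$ is at least $(\alpha/2)\,d^{\star} \geq (\alpha/4)\,d_{\rm avg}(H_i)$, giving the required $(\alpha/4)$-weak-regularity; by definition $H_i$ is an $\alpha$-expander. \emph{Cut edges:} each vertex is trimmed at most once over the whole recursion and sheds fewer than $(\alpha/2)\,d^{\star}$ edges at trim time, contributing a total of at most $n \cdot (\alpha/2)\,d^{\star} = \alpha m$ edges lost to trimming. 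For edges lost to sparse cuts, I would use the standard smaller-side charging: any vertex $v$ can be on the smaller-volume side at most $\log_2(2m)$ times (each such event halves the volume of $v$'s enclosing piece), and each event costs at most $v$'s current degree. Summing yields at most $\alpha \cdot 2m \cdot \log_2(2m) \leq (2\log n)\,\alpha m$ edges. Together, crossing edges are $O(\alpha \log n)\,m$, which is the asserted bound (constants can be absorbed into the choice of $\alpha$).

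The main obstacle is the interaction between weak regularity and recursion depth: a naive trimming threshold based on the \emph{current} piece's own average degree can drift upward after each removal and provoke cascading trims whose cumulative cost is difficult to bound. Pinning the threshold at $(\alpha/2)\,d^{\star}$ for \emph{every} recursive call sidesteps this, because the $2$-uniformly-dense hypothesis forces every induced subgraph of $H$ to have average degree within a factor $2$ of $d^{\star}$, so the fixed global threshold automatically remains at least an $(\alpha/4)$-fraction of the current piece's own average degree no matter how deep the recursion. The remaining ingredients---the existence of a sparse cut whenever the graph fails to be an $\alpha$-expander, and the smaller-side charging---are routine; in an algorithmic realization one would substitute an approximate sparse-cut oracle (e.g., via Cheeger's inequality) at a controlled polynomial loss in $\alpha$.
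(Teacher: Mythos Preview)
Your argument is correct and reaches the same conclusion as the paper, but the mechanism is genuinely different. The paper's proof never trims vertices. Instead, whenever a sparse cut $(A,S\setminus A)$ is taken, it adds self-loops at each vertex to exactly compensate for the lost cross-degree, so that $\vol_{H'}(v)=d_H(v)$ is an invariant throughout the recursion. When a piece $H'$ is finally declared an $\alpha$-expander, applying the expansion inequality to the singleton $\{v\}$ yields $\deg_{H'}(v)\geq\alpha\cdot\vol_{H'}(v)=\alpha\cdot d_H(v)$, and then $2$-uniform-density converts this into $(\alpha/4)$-weak-regularity exactly as in your argument. The cut-edge accounting is the same smaller-side charging, but with the preserved volumes the per-event charge is exactly $\alpha\,d_H(v)$ rather than your (still valid) upper bound $\alpha\,\deg_{H[S]}(v)\leq\alpha\,d_H(v)$, and there is no separate trimming term to track.

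The trade-off: the self-loop device obtains weak regularity ``for free'' as a consequence of expansion, with no trimming phase and no extra $\alpha m$ in the edge count; on the other hand one must check that self-loops interact correctly with the conductance definition and that the output pieces are interpreted without them. Your trimming route is more elementary---the degree lower bound is simply enforced by fiat---and your key observation that the threshold should be pinned globally at $(\alpha/2)\,d^{\star}$ (rather than relative to the current piece's own average degree) is precisely what the $2$-uniform-density hypothesis is designed to enable. Both proofs ultimately hinge on the same use of that hypothesis; they just deploy it at different moments.
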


\begin{proof}
	The following natural recursive algorithm
        (Algorithm~\ref{algo:2}) describes our partitioning
        procedure.\footnote{Step~\ref{alg:decomp-step6} in the
          algorithm does not run in polynomial time. This step can be
          replaced by a suitable logarithmic approximation algorithm,
          which would lose logarithmic terms in the eventual
          discrepancy bound, but would not change the essential nature
          of the result. The details are deferred to the full version.} The only idea which is non-standard is that of using self-loops around vertices during recursion, to capture the property of approximately preserving the degree of every vertex in the final partitioning w.r.t its original degree. This has been applied in other contexts by Thatchaphol et al.~\cite{SW-SODA19}.
	
	\begin{algorithm}[h]
		\caption{Input: Graph $H = (V,E)$}
		\label{alg:sparse-cut}
		\begin{algorithmic}[1]
			\State initialize the output partition $\pop := \emptyset$, and the set of recursive partitions $\rop = \{H := (V,E) \}$. \label{alg:decomp-step1}
			\While {$\rop \neq \emptyset$}  \label{alg:decomp-step2}
			\State choose an arbitrary $H' := (V',E') \in \rop$ to process. \label{alg:decomp-step3}
			\If {the expansion of $H'$ is at least $\alpha$} \label{alg:decomp-step4}
			\State add $H'$ to the final partitioning $\pop$ \label{alg:decomp-step5}
			\Else
			\State let $(S, V' \setminus S)$ denote a cut of conductance at most $\alpha$. \label{alg:decomp-step6}
			\State for each $v \in S$, add $|\delta(v, V' \setminus S)|$ self-loops at $v$. \label{alg:decomp-step7}
			\State for each $v \in V \setminus S$, add $|\delta(v, S)|$ self-loops at $v$. \label{alg:decomp-step8}
			\State add the sub-graphs (including the self-loops) induced in $S$ and $V' \setminus S$ to the recursion set $\rop$ and remove $H'$ from $\rop$. \label{alg:decomp-step9}
			\EndIf
			\EndWhile
		\end{algorithmic}
		\label{algo:2}
	\end{algorithm}
	
\begin{claim} \label{cl:degree-preseved}
Consider any vertex $v$. At all times of the algorithm,  $v$ appears in at most one sub-graph in the collection $\rop$, and morover, suppose it appears in sub-graph $H \in \rop$. Then its degree in $H$ (edges it is incident to plus the number of self-loops it is part of) is exactly its original degree in $G$.
\end{claim}

\begin{proof}
	The proof follows inductively over the number of iterations of the while loop in step~\ref{alg:decomp-step2}. Clearly, at the beginning, $\rop$ contains only $H$, and the claim is satisfied trivially. Suppose it holds until the beginning some iteration $i \geq 1$ of the algorithm. Then during this iteration, two possible scenarios could occur: (a) the algorithm selects a sub-graph $H' \in \rop$, and removes it from $\rop$ and adds it to $\pop$, or (b) the algorithm finds a sparse cut of $H'$ and adds the two induced subgraphs to $\rop$ after removing $H'$ from $\rop$. The inductive claim continues to hold in the first case since we dont add any new graphs to $\rop$. In case (b), note that, for every vertex $v \in H'$, we add as many self-loops as the number of edges incident ot $v$ that cross the partition in the new sub-graph it belongs to. Hence, the  inductive claim holds in this scenario as well.
\end{proof}

\begin{claim} \label{cl:add-expanders}
  Every sub-graph $H'$ which is added to $\pop$ is an $\frac{\alpha}{4}$-weakly-regular $\alpha$-expander.
\end{claim}

\begin{proof}
Consider any iteration of the algorithm where it adds a sub-graph $H'$ to $\pop$ in step~\ref{alg:decomp-step5}. That $H'$ is an $\alpha$-expander is immediate from the condition in step~\ref{alg:decomp-step4}.  Moreover, since the input graph $H$ is $2$-uniformly dense, we know that (a) for every vertex $v \in H$, its degree in $H$ is at least half of the average degree $\bar{d}(H)$ of $H$, and (b) the average degree $\bar{d}(H')$  of $H'$ (which is a sub-graph of $H$) is at most $2 \bar{d}(H)$. Finally, from the fact that $H'$ is an $\alpha$-expander, we can apply the expansion property to each vertex to obtain that $d_{H'}(v) \geq \alpha \cdot \vol_{H'}(v) = \alpha \cdot d_H(v)$. Here, the last equality is due to~\Cref{cl:degree-preseved}. Putting these observations together, we get that for every $v \in H'$, $d_{H'}(v) \geq \alpha \cdot d_H(v) \geq \frac{\alpha}{2} \bar{d}(H) \geq \frac{\alpha}{4} \bar{d}(H')$. This completes the proof.
\end{proof}

\begin{claim} \label{cl:few-edges}
The total number of edges going across different subgraphs in the final partitioning is at most $(2 \alpha \log n ) \, m$. 
\end{claim}
\begin{proof}
The proof proceeds via a standard charging argument. We associate a charge to each vertex which is $0$ initially for all $v \in V$. Then, whenever we separate a sub-graph $H'$ into to smaller sub-graphs $H_1$ and $H_2$ in step~\ref{alg:decomp-step9}, we charge all the crossing edges to the smaller sub-graph $H_1$ as follows: for each $v \in H_1$, we increase its charge by $\alpha \cdot \vol_{H'}(v) = \alpha  \cdot d_{H'}(v) = \alpha \cdot d_H(v)$, where the last equality follows from~\Cref{cl:degree-preseved}. Then it is easy to see that the total number of edges crossing between $H_1$ and $H_2$ is at most the total increase in charge (summed over all vertices in $H_1$) in this iteration (due to the fact that the considered partition is $\alpha$-sparse in $H$). Hence, over all iterations, the total number of edges going across different sub-graphs is at most the total charge summed over all vertices in $V$.

Finally, note that whenever a vertex $v$ is charged a non-zero amount, the sub-graph it belongs to has reduced in size by a factor of at least two, by virtue of our analysis always charging to the smaller sub-graph. Hence, the total charge any vertex $v \in V$ accrues is at most $( \log n \alpha ) d_G(v)$. Summing over all $v \in V$ then completes the proof.
\end{proof}

This completes the proof of \Cref{lem:expander-decomp}.
\end{proof}

We now complete the 
proof of \Cref{thm:regular-expanders}. We first apply~\Cref{lem:uniformly-dense} to partition the input graph $G$ into $O(\log n)$ edge disjoint subgraphs, say, $H_1, \ldots, H_s$, where each vertex of $G$ appears in at most $O(\log n)$ such subgraphs. For each of these sub-graphs $H_i$, we apply~\Cref{lem:expander-decomp} to obtain   $\frac{\alpha}{4}$-weakly-regular $\alpha$-expanders. Across all these partitions, the total number of edges excluded (due to going between parts in~\Cref{lem:expander-decomp}) is at most $m/2$. We recursively apply the above process (i.e.,~\Cref{lem:uniformly-dense} followed by~\Cref{lem:expander-decomp}) to the residual subgraph induced by these excluded edges. Thus, we have $O(\log n)$ such recursive steps, and taking the union of the $O(\log n)$ subgraphs constructed in such step proves~\Cref{thm:regular-expanders}.



\medskip
\noindent
\subsection*{Acknowledgments}
We thank Thatchaphol Saranurak for explaining and pointing us to \cite[Theorem~5.6]{BBGNSSS}. The last author would like to thank Navin Goyal for introducing him to~\cite{AANRSW-Journal98}.


{\small
\bibliographystyle{alpha}
\bibliography{bib}
}




\end{document}